\theoremstyle{plain}
\newtheorem{theorem}{Theorem}
\newtheorem{lemma}{Lemma}
\newtheorem{corollary}{Corollary}
\newtheorem{observation}{Observation}
\newtheorem{proposition}{Proposition}
\newtheorem{definition}{Definition}
\newcommand{\abs}[1]{{\lvert #1 \rvert}}
\newcommand{\inc}{\mathsf{inc}}
\newcommand{\dec}{\mathsf{dec}}
\newcommand{\res}{\mathsf{res}}
\newcommand{\ceil}[1]{\left\lceil #1 \right\rceil}
\title{Fast and Longest Rollercoasters}
\date{}
\author[1]{Pawe\l{} Gawrychowski}
\author[2]{Florin Manea}
\author[1]{Rados\l{}aw Serafin}
\affil[1]{Institute of Computer Science, University of Wroc\l{}aw, Poland\\ \tt{gawry@cs.uni.wroc.pl,radserafin@gmail.com}}
\affil[2]{Kiel University, Germany\\ \tt{flm@informatik.uni-kiel.de}}
\begin{document}
\maketitle

\begin{abstract}
For $k\geq 3$, a {\em k-rollercoaster} is a sequence of numbers whose every maximal contiguous subsequence, that is increasing or decreasing, has length at least $k$; $3$-rollercoasters are called simply rollercoasters. Given a sequence of distinct real numbers, we are interested in computing its maximum-length (not necessarily contiguous) subsequence that is a $k$-rollercoaster. Biedl et al. (2018) have shown that each sequence of $n$ distinct real numbers contains a rollercoaster of length at least $\lceil n/2\rceil$ for $n>7$, and that a longest rollercoaster contained in such a sequence can be computed in $O(n\log n)$-time (or faster, in $O(n \log \log n)$ time, when the input sequence is a permutation of $\{1,\ldots,n\}$). They have also shown that every sequence of $n\geqslant (k-1)^2+1$ distinct real numbers contains a $k$-rollercoaster of length at least $\frac{n}{2(k-1)}-\frac{3k}{2}$, and gave an $O(nk\log n)$-time (respectively, $O(n k\log \log n)$-time) algorithm computing a longest $k$-rollercoaster in a sequence of length $n$ (respectively, a permutation of $\{1,\ldots,n\}$).

In this paper, we give an $O(nk^2)$-time algorithm computing the length of a longest $k$-rollercoaster contained in a sequence of $n$ distinct real numbers; hence, for constant $k$, our algorithm computes the length of a longest $k$-rollercoaster in optimal linear time. The algorithm can be easily adapted to output the respective $k$-rollercoaster. In particular, this improves the results of Biedl et al. (2018), by showing that a longest rollercoaster can be computed in optimal linear time. We also present an algorithm computing the length of a longest $k$-rollercoaster
in $O(n \log^2 n)$-time, that is, subquadratic even for large values of $k\leq n$. Again, the rollercoaster can be easily retrieved. 
Finally, we show an $\Omega(n \log k)$ lower bound for the number of comparisons in any comparison-based algorithm computing the length of a longest $k$-rollercoaster.
\end{abstract}

\section{Introduction}
The mathematical study of patterns occurring in sequences of numbers is a rather old and well developed topic in combinatorics and algorithms on sequences. Within this topic, of a particularly high interest is the study of long increasing and decreasing (not necessarily contiguous) subsequences occurring in a sequence. For example, already in 1749, Euler defined the Eulerian polynomials, which are the generating function for the number of descents in permutations. Almost 200 years later, Erd\H{o}s and~Szekeres~\cite{Erdos1935} proved the existence of an increasing or a decreasing subsequence of length at least $a+1$ in a sequence of at least $n=a^2+1$ distinct reals. More precisely, they have shown the following theorem.
\begin{theorem}[Erd\H{o}s and~Szekeres, 1935]
	\label{thr-ES}
	Every sequence of $ab+1$ distinct real numbers contains an increasing subsequence of length at least $a+1$ or a decreasing subsequence of length at least $b+1$.
\end{theorem}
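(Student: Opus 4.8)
The plan is to give the classical pigeonhole argument due to Seidenberg. Write the sequence as $s_1, s_2, \ldots, s_n$ with $n = ab+1$, and attach to each index $i$ a pair of statistics: let $x_i$ be the length of a longest increasing subsequence that ends at position $i$, and let $y_i$ be the length of a longest decreasing subsequence that ends at position $i$. Note $x_i \geq 1$ and $y_i \geq 1$ for every $i$, so the map $i \mapsto (x_i, y_i)$ sends indices into the positive-integer lattice.

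The crux is to show that this map is injective. Fix indices $i < j$. Since the numbers are pairwise distinct, either $s_i < s_j$ or $s_i > s_j$. In the first case, appending $s_j$ to a longest increasing subsequence ending at position $i$ produces an increasing subsequence ending at position $j$, whence $x_j \geq x_i + 1$; in the second case, the symmetric argument gives $y_j \geq y_i + 1$. In either case $(x_i, y_i) \neq (x_j, y_j)$, so all $n$ pairs are distinct.

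To finish, suppose for contradiction that the sequence has no increasing subsequence of length $a+1$ and no decreasing subsequence of length $b+1$. Then $x_i \in \{1, \ldots, a\}$ and $y_i \in \{1, \ldots, b\}$ for all $i$, so the pairs $(x_i, y_i)$ live in a set of size at most $ab$; but injectivity requires $n = ab+1$ distinct pairs, a contradiction. Hence at least one of the two subsequences of the claimed length exists. I do not expect a genuine obstacle here: the only delicate point is the distinctness of the pairs, and that is exactly where the hypothesis of distinct reals is used (it guarantees that $s_i$ and $s_j$ are strictly comparable). A more structural alternative would be to invoke Dilworth's theorem on a suitable partial order on positions, but the direct counting argument above is shorter and fully self-contained, so that is the route I would take.
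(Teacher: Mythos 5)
Your argument is correct and complete: the Seidenberg pigeonhole proof via the pairs $(x_i,y_i)$ is sound, the injectivity step is exactly where distinctness of the reals is used, and the counting conclusion follows. The paper itself does not prove Theorem~\ref{thr-ES}; it cites the original 1935 result and remarks only that the theorem ``also follows from'' Dilworth's decomposition (Theorem~\ref{thr-Dilworth}), i.e., if $S$ has no decreasing subsequence of length $b+1$ then $S$ partitions into at most $b$ increasing subsequences, one of which must have length at least $\lceil (ab+1)/b\rceil = a+1$. So your route is genuinely different from the one the paper gestures at: the Dilworth route is more structural and yields the actual partition (which the paper later exploits algorithmically, via the byproduct of Algorithm~\ref{alg:lis}, to decompose parts into $k$ monotone subsequences), whereas your direct counting argument is shorter, fully self-contained, and avoids invoking a separate theorem. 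Either is a perfectly acceptable proof of the statement as given.
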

The theorem of Erd\H{o}s--Szekeres is strongly related to, and in fact also follows from, the well-known decomposition of Dilworth (see~\cite{Steele1995}) regarding chains and antichains in a finite partially ordered set. Dilworth's result can be restated in the context of the combinatorics of patterns in sequences of numbers as follows.
\begin{theorem}[Dilworth, 1950]
	\label{thr-Dilworth}
	Any finite sequence $S$ of distinct real numbers can be partitioned into $k$ ascending sequences, where $k$ is the maximum length of a descending sequence in $S$.
\end{theorem}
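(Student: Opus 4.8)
The plan is to prove the statement by the standard ``longest-chain labelling'' argument, which is exactly the sequence-theoretic incarnation of the usual proof of Dilworth's theorem for this particular poset. Write $S = s_1 s_2 \cdots s_n$ and, for each index $i \in \{1, \ldots, n\}$, let $\ell(i)$ be the length of a longest descending subsequence of $S$ whose last element is $s_i$. Since $k$ is by definition the maximum length of a descending subsequence of $S$, we have $\ell(i) \in \{1, 2, \ldots, k\}$ for every $i$; moreover, walking along a descending subsequence of maximum length $k$, the values $\ell$ assigns to its successive elements are exactly $1, 2, \ldots, k$, because a strictly larger label at one of these elements could be combined with the suffix of the subsequence to produce a descending subsequence longer than $k$. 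Hence each of the $k$ labels is attained.

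Next I would set $A_j = \{\, i : \ell(i) = j \,\}$ for $j = 1, \ldots, k$. These $k$ nonempty sets partition the index set $\{1, \ldots, n\}$, and the theorem follows once we show that the subsequence of $S$ indexed by each $A_j$ is ascending. This is the only nontrivial point, and I would prove it by contradiction, exploiting the extremality built into $\ell$: suppose $i < i'$ are both in $A_j$ yet $s_i > s_{i'}$; take a descending subsequence of length $\ell(i) = j$ ending at $s_i$ and append $s_{i'}$ to it, which is legal since $i' > i$ and $s_{i'} < s_i$. This produces a descending subsequence of length $j+1$ ending at $s_{i'}$, so $\ell(i') \geq j+1$, contradicting $i' \in A_j$. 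Therefore, within each class, indices that are larger in position carry larger values, i.e.\ each $A_j$ induces an ascending subsequence, and $S$ has been partitioned into $k$ ascending subsequences.

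I do not expect a real obstacle here; the single subtlety worth flagging is that $\ell$ must be defined through longest descending subsequences \emph{ending} at a given position rather than \emph{starting} there --- this orientation is precisely what makes the ``append $s_{i'}$'' step valid and thereby forces each label class to be monotone. (Symmetrically, one could instead label positions by longest \emph{ascending} subsequences ending at them and obtain a partition into descending sequences, recovering the dual form of the statement.)
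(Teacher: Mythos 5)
Your proof is correct: the labelling $\ell(i)$ by the length of a longest descending subsequence ending at position $i$, the contradiction argument showing each label class is ascending, and the observation that a maximum-length descending subsequence forces all $k$ labels to be attained together give exactly the claimed partition. The paper itself offers no proof of this theorem (it only cites Dilworth via Steele), but your argument is the standard one and is in fact precisely what the paper later exploits implicitly: the decomposition obtained as a byproduct of Algorithm~1 assigns each element to a class $R[a]$ according to the length of the longest monotone subsequence ending there, which is your $\ell$ in its dual (ascending-chain) form. So your write-up can be read as the combinatorial justification for why that byproduct decomposition is valid.
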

Recent surveys on the combinatorics of patterns occurring in sequences are~\cite{Linton:2010,Kitaev:2011}.  

The study of patterns in sequences of numbers also has a well developed algorithmic side (see, e.g.,~\cite{Fredman75,Hunt,Crochemore,BESPAMYATNIKH20007}). For instance, finding a longest increasing subsequence (not necessarily contiguous) contained in the input sequence is a basic problem in theoretical computer science, studied already from the 1960s~\cite{eaea,Aldous99,Romik:2015}, with applications in areas such as bioinfomatics and physics (see~\cite{Sun2007} and the references therein).
In particular, in 1975 Fredman~\cite{Fredman75} presented an algorithm (which he attributed to Knuth, now considered folklore) computing the length of a longest
increasing subsequence (LIS) in an array of $n$ numbers in $O(n\log n)$ time, and proved that this is optimal for comparison-based
algorithms. If required, the algorithm can be extended to retrieve such a subsequence.
If the input sequence can be sorted in linear time (in particular, when the input sequence is a permutation of $\{1,\ldots,n\}$)
and we do not require the algorithm to be comparison-based,
the solution given by Fredman can be implemented in $O(n \log\log n)$ time, see~\cite{Crochemore} and the references therein.
Fredman's algorithm is often called \textsc{Patience Sorting}, and has some connections to constructing the so-called Young Tableaux~\cite{Aldous99,Romik:2015}.

We consider a notion that is strongly related to longest increasing subsequences (and longest decreasing subsequences).
A {\em run} in a sequence of numbers is a maximal contiguous subsequence that is either increasing or decreasing.
A {\em $k$-rollercoaster}, where $k\geq 3$, is a sequence of numbers whose every run has length at least $k$; $3$-rollercoasters are
called, for short, rollercoasters. For example, the sequence $(3,6,8,10,9,5,1,2,4,7,11)$ is a $4$-rollercoaster with runs $(3,6,8,10)$,
$(10,9,5,1)$, $(1,2,4,7,11)$. Given a sequence $S[1:n]=(S[1],S[2],\dots,S[n])$ of $n$ distinct numbers, the $k$-rollercoaster problem is to find a maximum-size set of indices $i_1<i_2<\dots <i_m$ such that $(S[{i_1}],S[{i_2}], \dots,S[{i_m}])$ is a $k$-rollercoaster. In other words, this problem asks for a longest $k$-rollercoaster contained in the input sequence $S$. 

There is a simple, but useful, geometrical interpretation of $k$-rollercoasters. The input sequence $S[1:n]$ can be depicted as a set $P$ of points in the plane by translating, for $i$ from $1$ to $n$, the number $S[i]$ to a point $p_i=(i,S[i])$. In this setting, a $k$-rollercoaster in $S$ translates to a polygonal path in the plane, whose vertices are points of $P$, and such that every maximal sub-path, with positive- or negative-sloped edges, has at least $k$ points. The rollercoaster $(3,6,8,10,9,5,1,2,4,7,11)$ is depicted in the left half of Figure~\ref{pic:rol}. Two $4$-rollercoasters occurring in the sequence $(3,6,1,8,7,17,13,10,11,12,9,5,14,4,2,15,16)$ are depicted in the right half of the same figure.

\begin{figure}[htb]
\centering
\includegraphics[width=0.35\textwidth]{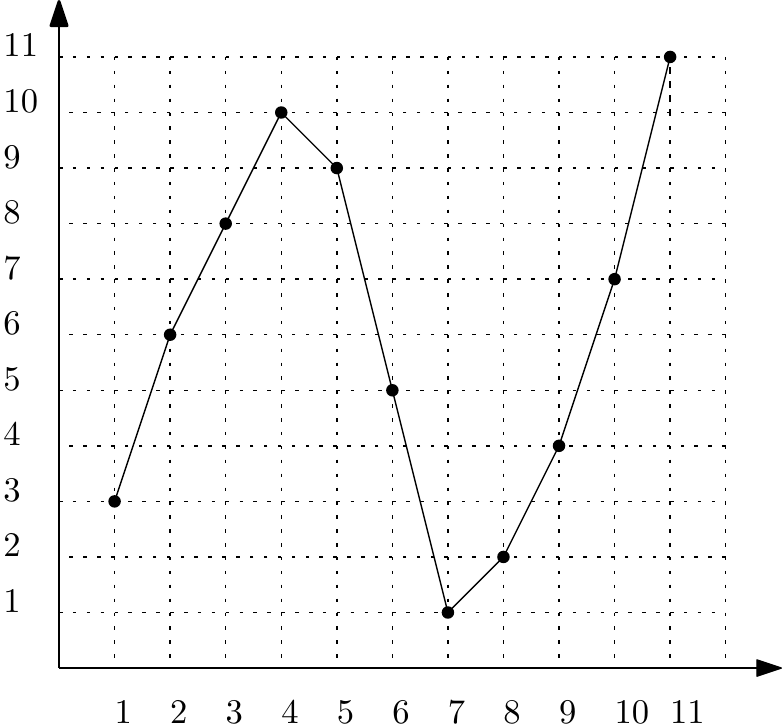} $ $\ \ \ \ 
\includegraphics[width=0.35\textwidth]{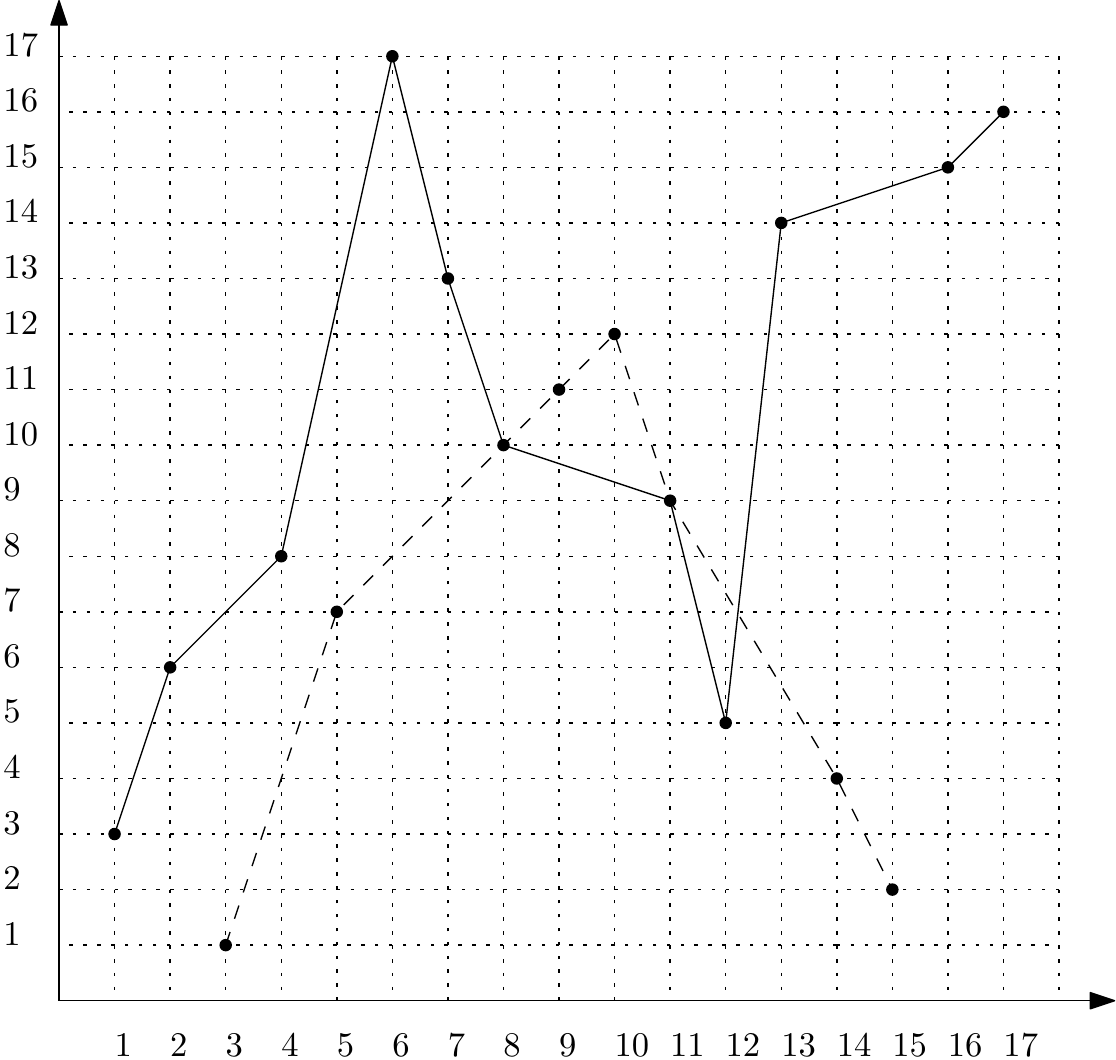}
\caption{Left: a $4$-rollercoaster $(3,6,8,10,9,5,1,2,4,7,11)$ with runs $(3,6,8,10)$, $(10,9,5,1)$, $(1,2,4,7,11)$. Right: two $4$-rollercoasters, represented with a solid and, respectively, a dashed line, in $(3,6,1,8,7,17,13,10,11,12,9,5,14,4,2,15,16)$.}
\label{pic:rol}
\end{figure}

While rollercoasters seem interesting on their own as a combinatorial structure, the original motivation for their study was
a connection to computational geometry and graph drawing, namely to point-set embeddings of {\em caterpillars} (see~\cite{Biedl2017,Biedl2018}
and the references therein). More precisely, constructing a long rollercoaster in a sequence of numbers was used as an intermediate step
towards obtaining a method of drawing a $n$-vertex top-view caterpillar, with L-shaped edges, on a set of $\frac{25}{3}n$
general orthogonal position points in the plane. This is currently the best known bound on the number of points required to draw such a graph.

In~\cite{Biedl2018}, the following results regarding $k$-rollercoasters were shown. First, from a combinatorial point of view, for $k=3$, it was shown that the length of a longest rollercoaster contained in a sequence of $n\geq 7$ distinct numbers is at least $\lceil\frac{n}{2}\rceil$. As far as $k$-rollercoasters are concerned, it was shown that for $k\geqslant 4$ every sequence of $n\geqslant (k-1)^2+1$ distinct numbers contains a $k$-rollercoaster of length at least $\frac{n}{2(k-1)}-\frac{3k}{2}$. From an algorithmic point of view, both previously mentioned results were constructive, leading to an $O(n)$-time (respectively $O(n\log k)$) algorithm computing a long (but not necessarily a longest) rollercoaster (respectively, $k$-rollercoaster) contained in a sequence of $n$ distinct numbers. A longest rollercoaster contained in such a sequence was computed by
an extension of Fredman's algorithm in $O(n\log n)$-time, and if the input sequence is a permutation of $\{1,\ldots,n\}$
(or, more generally, sortable in linear time) in $O(n \log \log n)$ time.
By further generalising this approach, an $O(nk\log n)$-time (respectively, $O(n k\log \log n)$-time) algorithm computing a longest
$k$-rollercoaster in a sequence of $n$ distinct numbers (respectively, a permutation of $\{1,\ldots,n\}$) can be obtained.
Note that, by the theorem of Erd\"os and Szekeres, a sequence of $n$ distinct numbers always contains a $\lfloor\sqrt{n}\rfloor$-rollercoaster,
and the aforementioned algorithm computes a longest such rollercoaster in $O(n^{1.5}\log n)$ time.

\subparagraph{Our contributions.}  We consider the problem of computing a longest $k$-rollercoaster in an input sequence $S[1:n]$ and provide
three results.

Firstly, we design a comparison-based algorithm computing the length of a longest $k$-rollercoaster in a sequence of $n$ distinct numbers
in $O(nk^2)$ time. Thus, we obtain an optimal linear-time algorithm for constant values of $k$, in particular for $k=3$. This
significantly improves the results of~\cite{Biedl2018} and shows that, even though longest rollercoasters are related to longest increasing
subsequences, the rich combinatorial structure of the former makes them provably easier to find.
The starting point of our algorithm is the following natural dynamic programming formulation. For each $2\leq i\leq k$, and for
each element $S[j]$, we compute a longest (not necessarily contiguous) subsequence of $S$ ending with $S[j]$ and with every run of
length at least $k$, except for the last run, which has only $i$ elements if $i < k$ and at least $k$ elements if $i = k$.
Now the difficulty is to find the predecessor $S[j']$ of $S[j]$ in such a subsequence in time proportional to $k$, in particular avoiding
any kind of binary search. We greedily decompose the input sequence into blocks with a certain property related to Dilworth's
theorem and prove, by a careful case analysis, that $j'$ must belong to the previous few such blocks. This, together with
the special structure of the blocks and appropriate data structures, allows us to find $j'$ in $O(k)$ amortised time.

Secondly, we focus on the case of large $k$. Given that both the previous and the new algorithm have at least linear dependency on $k$,
it might seem plausible that this is inherent to the problem, for example that for $k\geq \lfloor\sqrt{n}\rfloor$ the running time
of any algorithm needs to be $\Omega(n^{1.5})$. We show that this is not the case by designing
a subquadratic algorithm that computes a longest $k$-rollercoaster in a sequence of $n$ distinct numbers in $O(n\log^{2}n)$ time.
To obtain this result, we exploit the fact that if an increasing (respectively, decreasing) run in a longest $k$-rollercoaster extends from $S[i]$ to $S[j]$, 
then that run should be LIS (respectively, longest decreasing sequence, LDS for short) in $S[i:j]$. If one arranges the length of
LIS (respectively, LDS) in $S[i:j]$ in an $n\times n$ matrix then the matrix has the anti-Monge property. It is known that
all row maxima of an anti-Monge matrix can be found in $O(n)$ time~\cite{SMAWK}, that is, in sublinear time w.r.t. the size of the matrix (given an oracle
access to the elements of the matrix). Such properties have been successfully exploited to speed up certain dynamic programming algorithms. 
We also follow this route, and construct a longest $k$-rollercoaster using dynamic programming, essentially by gluing together
LISs and LDSs of consecutive contiguous subsequences of $S$.

Thirdly, we show that any comparison-based algorithm computing a longest $k$-rollercoaster needs $\Omega(n\log k)$ comparisons.
Our reasoning is similar to the one used by Fredman to show that any comparison-based algorithm computing a LIS
needs $\Omega(n\log n)$ comparisons. We leave as an open problem to close the gap between the lower and upper bounds shown here.

The paper is organised as follows. After a series of preliminaries, we describe the~$O(nk^2)$-time algorithm for computing
the length of a longest $k$-rollercoaster, followed by the $O(n \log^2 n)$-time algorithm. Then we show how the respective longest $k$-rollercoasters can be effectively constructed. We conclude with the lower bound for
the number of comparisons needed to compute the length of a longest $k$-rollercoaster in a sequence of length~$n$. 

\section{Preliminaries}
\label{prel}
We consider sequences of distinct real numbers and work in the comparison-based model.
If $S$ is a sequence of $n$ numbers, then $|S|=n$ is the length of the sequence, and $S[i]$ denotes its $i^\text{th}$ element.
A {\em subsequence} of $S$ is a sequence $(S[i_1],S[i_2],\ldots, S[i_m])$, defined by specifying the indices $1\leq i_1<i_2<\ldots<i_m\leq n$.
For $1\leq i\leq j\leq n$, $S[i:j]$ denotes the {\em contiguous subsequence} $(S[i], S[i+1],\ldots, S[j])$; in particular, $S[1:n]$ denotes the entire $S$.
Note that unless explicitly stated, a subsequence is not necessarily contiguous. An increasing subsequence (respectively, decreasing subsequence)
of $S$ is a subsequence $(S[i_1],S[i_2],\ldots, S[i_m])$ such that $S[i_j] < S[i_{j+1}]$, for all $1\leq j\leq m-1$ (respectively, $S[i_j] > S[i_{j+1}]$,
for all $1\leq j\leq m-1$). A longest increasing (respectively, decreasing) sequence, for short LIS (respectively, LDS), is an increasing
(respectively, decreasing) sequence with the largest possible length. Fredman gave an $O(n\log n)$-time algorithm for computing the
length of LIS, denoted $\res$ in Algorithm~\ref{alg:lis}. A byproduct of this algorithm is a partition of $S[1:n]$ into $\res$ non-increasing
subsequences that can be obtained by creating, for every $1\leq j \leq \res$, a list of elements that has been stored in $R[j]$.

\begin{algorithm}
\caption{Finding the length of LIS of $S$}
\label{alg:lis}
\begin{algorithmic}[1]
\State $R[0] \gets 0$
\State $\res \gets 0$
\For{$i \gets 1 $ \textbf{to} $ n$}
\State $ k \gets \max \{ j : R[j] < S[i]\} $ \Comment binary search over $R[0] < R[1] < R[2] < \ldots $
\State $ R[k+1] \gets S[i] $
\State $ \res \gets \max \{ \res , k+1 \}$
\EndFor
\State \textbf{return} $\res$
\end{algorithmic}
\end{algorithm}

A {\em run} in a sequence of numbers is a maximal contiguous subsequence that is increasing or decreasing. A {\em $k$-rollercoaster}
is a sequence of numbers such that every run has length at least $k$; $3$-rollercoasters are called, for short, rollercoasters.
Given a sequence $S[1:n]$ we are interested in finding its longest subsequence that is a $k$-rollercoaster. To make the exposition
easier to follow, we focus first on finding the length of such a subsequence. Recovering the subsequence itself is, in all our algorithms,
rather straightforward, and explained in Section \ref{sec:recover}.

\section{Computing a Longest $k$-Rollercoaster in $O(nk^2)$-Time}
In this section we show how to find a longest $k$-rollercoaster of $S[1:n]$ in $O(nk^{2})$ time.

We begin our algorithm with a preprocessing phase. An alternating $k$-decomposition of $S[1:n]$ is a partition
of $S[1:n]$ into contiguous subsequences (called parts) $S_1,S_2,\ldots,S_m$ such that the length of LIS in
the odd parts ($S_1$, $S_3$, $S_5$, and so on) is $k$ while the length of LDS in the even parts is $k$, possibly
smaller for the very last part, and additionally by removing the last
element of any odd (even) part we obtain a sequence with LIS (LDS) of length less than $k$.
In other words, for $\ell \geq 1$, $S_\ell$ is either the shortest contiguous subsequence of $S$ that follows directly after $S_1\cdots S_{\ell-1}$
and has for $\ell$ odd (even) a LIS (respectively, LDS) of length $k$, if such a subsequence exists, or the whole
remaining part of $S$ otherwise.
For example, an alternating 3-decomposition of $S = (1, 4, 2, 5, 8, 7, 6, 3)$ is $(1,4,2,5), (8,7,6), (3)$.

\begin{lemma}
\label{lem:decompose}
An alternating $k$-decomposition of $S[1:n]$ can be found in $O(n\log k)$ time.
\end{lemma}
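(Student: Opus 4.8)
The plan is to build the parts $S_1,S_2,\ldots,S_m$ greedily from left to right, each time invoking Fredman's algorithm (Algorithm~\ref{alg:lis}) with an early-stopping rule as a subroutine. Suppose $S_1\cdots S_{\ell-1}$ has already been fixed and let $T=S[\mathrm{start}:n]$ be the remaining suffix. If $\ell$ is odd, run Algorithm~\ref{alg:lis} on $T$, feeding it $T[1],T[2],\ldots$ one at a time and tracking the running value $\res$ (which, after $T[1:p]$ has been consumed, equals the length of a LIS of $T[1:p]$); halt as soon as $\res$ reaches $k$, say after consuming $T[1:p]$, and set $S_\ell=S[\mathrm{start}:\mathrm{start}+p-1]$. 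If $T$ is exhausted without $\res$ ever reaching $k$, set $S_\ell=T$ and stop the whole procedure. If $\ell$ is even, do the same but with the mirrored algorithm that maintains a strictly decreasing array $R[0]>R[1]>\cdots$ and thereby computes LDS lengths (equivalently, run Algorithm~\ref{alg:lis} on $T$ with every element negated, which keeps everything comparison-based). The two points to verify are that this early-stopping rule selects exactly the part demanded by the definition, and that it keeps the auxiliary array small enough that a single step costs $O(\log k)$ rather than $O(\log n)$.

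For correctness, recall that appending one element to the end of a sequence changes its LIS length by $0$ or $1$: an increasing subsequence of $T[1:p]$ not using $T[p]$ is one of $T[1:p-1]$, and one ending at $T[p]$ becomes, after deleting $T[p]$, an increasing subsequence of $T[1:p-1]$ that is shorter by one. Hence the values of $\res$ produced while scanning $T$ are non-decreasing with unit steps, so the position $p$ at which $\res$ first equals $k$ is precisely the length of the shortest prefix of $T$ whose LIS has length $\geq k$; moreover that LIS has length exactly $k$, and $T[1:p-1]$ has LIS of length $k-1<k$. This is exactly the defining property of $S_\ell$ for odd $\ell$, and symmetrically (with LDS replacing LIS) for even $\ell$. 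When no prefix of $T$ attains the required LIS (respectively LDS) length $k$, taking $S_\ell=T$ as the last part is again what the definition prescribes, since deleting its last element only decreases the LIS (respectively LDS), so the ``last element'' condition holds trivially there as well. Also, each invocation consumes at least one element (as $k\geq 3$, any single element has LIS length $1<k$), so the procedure terminates with at most $n$ parts.

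For the running time, note that $S_1,\ldots,S_m$ partition $S$, so across the whole execution the subroutine consumes exactly $n$ elements in total; it therefore suffices to bound the cost of consuming a single element by $O(\log k)$. Within one invocation we stop the instant $\res$ reaches $k$, so at every step before that $\res\leq k-1$ and the array $R$ maintained by the algorithm has at most $k$ meaningful entries $R[0]<R[1]<\cdots<R[\res]$. Each step performs one binary search over these entries plus $O(1)$ extra work, i.e.\ $O(\log k)$ time. Summing over all $n$ consumed elements gives the claimed $O(n\log k)$ bound; the list of parts (or of their boundaries) is produced along the way at no extra cost.
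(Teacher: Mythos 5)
Your proposal is correct and follows the same approach as the paper: run Algorithm~\ref{alg:lis} with early termination once $\res=k$ to greedily peel off each part, alternating between LIS and LDS, with the $O(\log k)$ per-element cost coming from the array $R$ never exceeding $k$ entries and the total being $O(n\log k)$ because the parts are disjoint. Your added verification that $\res$ increases by at most one per element (so the stopping point is exactly the shortest prefix with LIS of length $k$) is a detail the paper leaves implicit, but the argument is the same.
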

\begin{proof}
By terminating Algorithm~\ref{alg:lis} as soon as $\res=k$ we can find the shortest prefix of $S$ with LIS equal to $k$
in $O(d\log k)$ time, where $d$ is the length of the prefix. Then we find the shortest prefix of the remaining suffix
of $S$ with LDS equal to $k$, and repeat. Overall, this takes $O(n\log k)$ time because all parts are disjoint.
\end{proof}

\begin{proposition}
\label{prop:alternations}
Let $A$ be a $k$-rollercoaster in $S$. Any part $S_\ell$ contains elements
of at most four consecutive runs of $A$.
\end{proposition}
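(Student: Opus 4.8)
The plan is to exploit the minimality built into the alternating $k$-decomposition: if $S_\ell=S[c:d]$ is odd then the LIS of $S[c:d-1]$ has length at most $k-1$, and symmetrically if $S_\ell$ is even then its LDS has length at most $k-1$ (this covers the last part too, whose LIS, resp. LDS, is anyway at most $k-1$). I would begin with a structural reduction. Since $S_\ell$ is a contiguous block of $S$, the elements of $A$ lying inside $S_\ell$ form a contiguous block of $A$; hence they meet a set of \emph{consecutive} runs $r_a,r_{a+1},\dots,r_b$ of $A$, and each \emph{interior} run $r_{a+1},\dots,r_{b-1}$ lies entirely inside $S_\ell$, since it begins no earlier than, and ends no later than, the extreme $A$-elements contained in $S_\ell$. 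So it suffices to show that at most two interior runs can be fully contained in $S_\ell$.

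The key observation is: if $S_\ell$ is odd and an increasing run $r$ of $A$ is contained in $S_\ell$, then the last element of $r$ must be $S[d]$, the last element of $S_\ell$. Indeed $r$ is an increasing subsequence of $S[c:d]$ of length at least $k$; if its last element sat at a position $<d$ it would be an increasing subsequence of $S[c:d-1]$ of length $\ge k$, contradicting that $S[c:d-1]$ has LIS at most $k-1$. The symmetric statement holds for a decreasing run contained in an even part.

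Now assume for contradiction that $S_\ell$ — say, odd — meets at least five consecutive runs, so there are at least three interior runs $r_{a+1},\dots,r_{b-1}$, all contained in $S_\ell$. Because runs alternate in direction, among any three consecutive interior runs one can always point to an increasing run $r_i$ that is \emph{not} the last interior run (inspect $r_{b-1}$: if it is increasing take $r_{b-3}$, otherwise $r_{b-2}$). By the key observation $r_i$ ends at $S[d]$. But then $r_{i+1}$ is also an interior run, hence fully contained in $S_\ell$; its first element is the last element of $r_i$, sitting at position $d$, so its second element — which exists because $|r_{i+1}|\ge k\ge 2$ — sits at a position $>d$, contradicting containment in $S_\ell$. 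The even case is symmetric, so $S_\ell$ meets at most four consecutive runs. This is tight precisely because consecutive runs share their extreme elements: an odd part can fully contain a decreasing run followed by an increasing run ending at $S[d]$, while also meeting a suffix of the increasing run before the former and the lone shared endpoint $S[d]$ of the decreasing run after the latter.

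The main obstacle is the bookkeeping around shared run-endpoints: since the last element of each run equals the first element of the next, a part can ``touch'' runs it does not meaningfully contain, so one must phrase the crucial fact (``a contained increasing run in an odd part ends at position $d$'') and the contradiction (extracted from three \emph{fully} contained interior runs, i.e.\ five touched runs) so that these boundary coincidences are accounted for rather than overcounted; everything else is straightforward casework on the parity of $S_\ell$ and the directions of the interior runs.
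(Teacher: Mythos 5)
Your proof is correct and rests on the same key fact as the paper's: by the minimality built into the decomposition, $S_\ell$ minus its last element has LIS (resp.\ LDS) below $k$, so a run of the matching orientation cannot be fully contained in the part unless it uses the part's last element. The paper packages this slightly more compactly --- five touched runs imply that $S_\ell$ without its last element still touches four consecutive runs and hence fully contains two consecutive runs, one of each orientation, contradicting the minimality at once --- but your version, with the ``must end at $S[d]$'' observation and the overflowing successor run, is an equally valid way to cash in the same property.
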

\begin{proof}
By contradiction. Let $S'_{\ell}$ be $S_{\ell}$ without the last element.
If $S_{\ell}$ contains elements of five consecutive runs of $A$ then
$S'_{\ell}$ contains elements of four consecutive runs of $A$,
and hence all elements of two such consecutive runs.
Thus, if $S_{\ell}$ is an odd (even) part then $S'_{\ell}$ contains LIS (LDS)
of length $k$, which contradicts the definition of an alternating
$k$-decomposition.
\end{proof}

By Dilworth's theorem, a part with LIS of length $k$ can be decomposed into $k$ decreasing subsequences, and such
a decomposition can be obtained as a byproduct of Algorithm~\ref{alg:lis}. Thus, we can decompose each part into
up to $k$ monotone (increasing or decreasing, depending on whether the part is odd or even) subsequences.
These subsequences can be then merged to obtain a sorted list $P_{\ell}$ of all elements in the corresponding part
$S_{\ell}$ in $O(n\log k)$ overall time, for example by first merging pairs of subsequences, then quadruples, and so on.

Before moving on to the description of our algorithm, we need a combinatorial lemma that relates an alternating
$k$-decomposition to a longest rollercoaster.

\begin{restatable}{lem}{lemsplit}
\label{lem:split}
Suppose that $x=S[j]$ is a non-first element occurring in an increasing run of a longest $k$-rollercoaster, and $y=S[j']$ is
its predecessor in the same run, and consider an alternating $k$-decomposition of $S[1:n]$. Then either $x$ and $y$ are
in the same part $S_i$, or $y$ is in one of the parts $S_{i-4}, S_{i-3}, S_{i-2}, S_{i-1}$. 
\end{restatable}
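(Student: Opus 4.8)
The plan is to argue by contradiction. Assume $x\in S_i$ but $y\in S_{i'}$ with $i'\le i-5$; I will construct a $k$-rollercoaster in $S$ strictly longer than the assumed longest one $A$. Write $v_y=S[j']$ and $v_x=S[j]$, so $v_y<v_x$; call an element \emph{low} if its value is below $v_y$ and \emph{high} if its value is above $v_x$, and let $R=(r_1<r_2<\dots<r_p)$ be the increasing run of $A$ containing $y=r_t$ and $x=r_{t+1}$, where necessarily $p\ge k$. Since $y$ and $x$ are consecutive in $R$ they are consecutive in $A$, so no element of $A$ has an index strictly between $j'$ and $j$; in particular the parts $S_{i-4},S_{i-3},S_{i-2},S_{i-1}$ all lie in the index interval $(j',j)$, contribute nothing to $A$, and none of them is the last part of the decomposition.

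The first, easy, observation is that no index $\iota\in(j',j)$ has $S[\iota]\in(v_y,v_x)$: otherwise $(\dots,r_t,S[\iota],r_{t+1},\dots)$ would still be a valid $k$-rollercoaster in which only the run $R$ changes, and only by getting longer, contradicting the maximality of $A$. Hence every element with index in the gap $(j',j)$ is low or high. Next I read off the combinatorial budget hidden in the four gap parts: among $S_{i-4},\dots,S_{i-1}$, two are odd and two are even, so (being non-last) two of them have a longest increasing subsequence of length exactly $k$ and two a longest decreasing subsequence of length exactly $k$, and, their indices being consecutive, the two types alternate. For an odd gap part $S_o$ split its elements into the low ones $L_o$ and the high ones $H_o$; since every low value lies below every high value, in any increasing subsequence of $S_o$ the elements of $L_o$ all precede those of $H_o$, so $k=\mathrm{LIS}(S_o)\le \mathrm{LIS}(L_o)+\mathrm{LIS}(H_o)$, where $\mathrm{LIS}$ denotes the length of a longest increasing subsequence of the indicated elements taken in their original order. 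Symmetrically, for an even gap part $S_e$ one obtains $k\le \mathrm{LDS}(H_e)+\mathrm{LDS}(L_e)$, and a length-$k$ longest decreasing subsequence of $S_e$ is a genuine decreasing sequence that, whenever it meets both sides, drops from a high value directly down to a low value, skipping the forbidden interval $(v_y,v_x)$.

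I would then insert, strictly between $y$ and $x$, a short \emph{detour} made of at most three monotone pieces: an increasing run of high elements prolonging $R$ upward out of $y$, a decreasing run falling from the high region down to the low region, and an increasing run of low elements climbing back up into $x$ and then along $r_{t+2},\dots,r_p$. These pieces are harvested, in this index order, from a pattern odd--even--odd of gap parts, which occurs inside either of the two possible alternation patterns of four consecutive parts; the two odd parts feed the two increasing pieces via the split inequality above, and the even part feeds the falling piece as its longest decreasing subsequence. Which pieces are actually present, and how long each of them has to be, is dictated by whether $y$ is the first element of $R$, whether $R$ is the first run of $A$, and by the mirror-image conditions at the $x$-end (whether $x$ is the last element of $R$ and whether $R$ is the last run of $A$): in the generic situation the decreasing runs of $A$ that neighbour $R$ absorb the new turning points created at $y$ and at $x$, so the detour pieces only have to be non-empty --- which the inequalities give immediately --- and splicing the detour in yields a $k$-rollercoaster with strictly more elements than $A$.

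I expect the main obstacle to be the boundary cases in which $A$ has nothing to absorb a newly created turning point --- for instance $y=r_1$ with $R$ the first run of $A$, or its mirror image at the far end of $R$ --- because then the detour must itself contain a complete monotone run of length at least $k$ confined to a single value region. This is exactly where two gap parts of the \emph{same} type are needed rather than one: combining $\mathrm{LIS}(L_{o_1})+\mathrm{LIS}(H_{o_1})\ge k$ with $\mathrm{LIS}(L_{o_2})+\mathrm{LIS}(H_{o_2})\ge k$ for the two odd parts $o_1<o_2$ (and likewise for the two even parts), and using that a low increasing sequence taken from the earlier odd part followed by a high increasing sequence taken from the later one is still increasing, one of the two value regions is forced to contain a monotone sequence of length at least $k$ inside the gap, which supplies the missing run. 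Carefully enumerating the few boundary configurations, choosing the correct detour shape in each, and then re-verifying that every run of the spliced sequence still has length at least $k$ is where most of the work lies; it is also the step in which the constant four is used, since one needs a ``shaping'' part on each side of a full-length monotone part inside the gap.
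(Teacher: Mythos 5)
Your setup is sound and coincides with the paper's: argue by contradiction, observe that the four parts $S_{i-4},\dots,S_{i-1}$ lie index-wise strictly between $j'$ and $j$, contain no element of $A$, are none of them the last part of the decomposition, and contain a consecutive odd--even--odd triple; this makes available two increasing sequences of length $k$ with a decreasing sequence of length $k$ between them, all living strictly between $y$ and $x$. The ``no gap value in $(v_y,v_x)$'' observation is also correct. The genuine gap is in the splicing. Your detour keeps the prefix $r_1,\dots,r_t$ and prolongs it past $y$ with \emph{high} elements; after the split this prefix plus the high piece becomes a complete increasing run, so when $t<k$ you need an increasing sequence of at least $k-t$ high elements inside the gap. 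Nothing guarantees one exists: all elements of the four gap parts may lie below $v_y$, in which case $\mathrm{LIS}(H_o)=0$ for every gap part, the first detour piece is forcibly empty, and $y$ becomes the maximum of a run of length $t<k$. The inequality $\mathrm{LIS}(L_o)+\mathrm{LIS}(H_o)\ge k$ bounds only the sum, and your two-part reinforcement does not close the hole either: from the two inequalities you get $\mathrm{LIS}(L_{o_1})+\mathrm{LIS}(H_{o_2})\ge k$ or $\mathrm{LIS}(H_{o_1})+\mathrm{LIS}(L_{o_2})\ge k$, but the second combination is not monotone (high values followed by low ones, each piece increasing) and neither combination is confined to one value region, so no single region need contain a monotone sequence of length $k$. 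The same failure occurs in mirror image at the $x$-end when $p-t<k$; I also do not see how the decreasing runs adjacent to $R$, which sit at $r_1$ and $r_p$, could ``absorb'' turning points created in the middle of $R$ at positions $t$ and $t+1$.

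The repair is to stop insisting that the inserted increasing pieces stay within one value region, which means giving up the short side of the run rather than extending it. When $t<k$, delete $r_2,\dots,r_t$ and put in its place the entire length-$k$ LIS of an odd gap part --- a genuine increasing sequence that may freely cross from low to high values --- followed by the length-$k$ LDS of the adjacent even part; the junction comparisons can only lengthen runs, so every run of the spliced sequence has length at least $k$, and you delete at most $k-2$ elements while inserting at least $2k$. This is exactly the paper's proof: it distinguishes four cases according to whether the portion of the run before $y$ and the portion after $x$ each contain at least $k$ elements, replaces the short portions by whole-part LIS/LDS (or, when both portions are long, merely inserts one LDS between $y$ and $x$), and concludes with a one-line count giving a strictly longer $k$-rollercoaster in each case. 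Your low/high dichotomy and forbidden-interval observation are correct but become unnecessary once the inserted runs are allowed to straddle $v_y$ and $v_x$.
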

\begin{proof}
By contradiction. Suppose that there are at least four parts between $x$ and $y$, i.e., $x$ is in $S_i$ and $y$ is in some $S_k$ with $k<i-4$. Let $r$ denote the run in the $k$-rollercoaster that contains $x$ and $y$, let $d$ be the length of $r$, and let $\ell$ be such that $r[\ell] = y$ and $r[\ell+1] = x$. We assume that $r$ is an increasing run (see Figure \ref{pic:rol2}); the case when $r$ is decreasing can be treated in the same way. 

\begin{figure}[htb]
\centering
\includegraphics[width=0.5\textwidth]{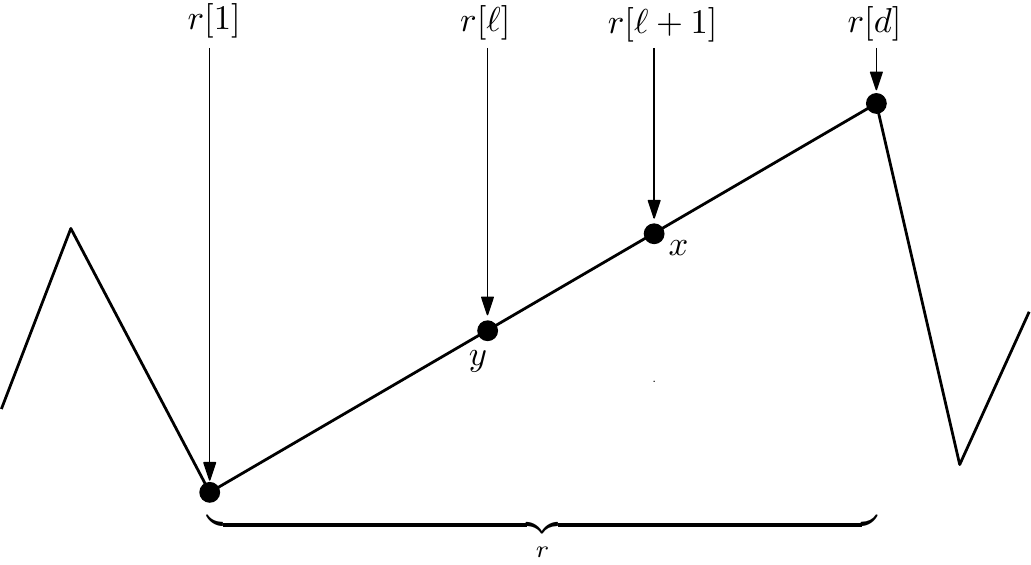}
\caption{The increasing run $r$ from Lemma \ref{lem:split}, with the points $x$ and $y$ highlighted.}
\label{pic:rol2}
\end{figure}
Consider the following four cases:
\begin{enumerate}
\item $\ell\leq k-1$ (i.e., there are at most $k-2$ elements in $r$ before $y$) and $ k-2\geq d -\ell -1$ (there are at most $k-2$ elements in $r$ after $x$).
\item $\ell\leq k-1$  and $k-1\leq d-\ell -1$ (there at least $k-1$ elements in $r$ after $x$).
\item $\ell \geq k$ (there are at least $k-1$ elements in $r$ before $y$) and $ k-2\geq d -\ell -1$.
\item $\ell \geq k$ and  $k-1\leq d-\ell -1$.
\end{enumerate}

Recall that there are at least four whole parts between $x$ and $y$. Therefore, in particular there are three consecutive parts
$S_{i'}, S_{i'+1}$, and $S_{i'+2}$ such that the first has LIS of length $k$, the second has LDS of length $k$, and the third
has LIS of length $k$.

In the first case, we replace $r[2:d-1]$ with LIS of $S_{i'}$, the LDS of $S_{i'+1}$, and LIS of $S_{i'+2}$.
It is straightforward to verify that we obtain a valid $k$-rollercoaster, and because we remove at most $2k-4$ elements and
add at least $3k$, this creates a longer $k$-rollercoaster, which is a contradiction.
In the second case, we replace $r[2:\ell]$ with LIS of $S_{i'}$ and LDS of $S_{i'+1}$. Again, it is straightforward
to verify that we obtain a valid longer $k$-rollercoaster, because we remove at most $k-2$ elements and add
at least $2k$.
Similarly, in the third case, we replace $r[\ell+1 : d-1]$ with LDS of $S_{i'+1}$ and LIS of $S_{i'+2}$ to obtain a longer $k$-rollercoaster.
Finally, in the fourth case we simply insert LDS of $S_{i'+1}$ between $x$ and $y$ to obtain a longer $k$-rollercoaster.
\end{proof}

After the initial preprocessing phase we apply dynamic programming.
For $1\leq i\leq k$, we say that a subsequence of $S$ (not necessarily contiguous) is a $(k,i)_{+}$-rollercoaster
if it ends with an increasing run of length exactly $i$ when $i<k$ and at least $k$ when $i=k$, while every other
run is of length at least $k$. Additionally, we consider $k$-rollercoaster ending with a decreasing run as $(k, 1)_{+}$-rollercoaster. We want to construct, for every $1\leq i \leq k$ and $1\leq j\leq n$, a longest $(k,i)_{+}$-rollercoaster
ending with $S[j]$. To this end we calculate $M_{+}[j,i]$, the position in $S$ of the predecessor of $S[j]$ in such
a $(k,i)_{+}$-rollercoaster, and $L_{+}[j,i]$, the length of the respective $(k,i)_{+}$-rollercoaster.
A $(k,i)_{-}$-rollercoaster is defined similarly, except that the last run should be decreasing, and we also calculate
the values $M_{-}[j,i]$ and $L_{-}[j,i]$, defined similarly to the above and corresponding to such a $(k,i)_{-}$-rollercoaster.
We only describe in detail how to compute $M_{+}[j,i]$ and $L_{+}[j,i]$, as $M_{-}[j,i]$ and $L_{-}[j,i]$ are computed analogously.
The computation proceeds from left to right, that is, we iterate over the parts $S_1, S_{2}, \ldots$ and
compute, for every element $S[j]$ of the current part $S_{\ell}$, the values of $M_{+}[j,i]$ and $L_{+}[j,i]$ for every
$1\leq i\leq k$. See Algorithm~\ref{alg:nk2} for a high-level overview of the algorithm.

\begin{algorithm}[h]
\caption{Computing the length of a longest $k$-rollercoaster}
\label{alg:nk2}
\begin{algorithmic}[1]
\State Find an alternating $k$-decomposition $S_{1},\ldots,S_{m}$ of $S$.
\For{$1 \leq \ell \leq m$}
\State Merge the $k$ monotone subsequences constituting $S_{\ell}$ to obtain a single sorted list $P_{\ell}$.
\EndFor
\For {$1\leq \ell \leq m$}
\Statex \quad\ $\triangleright$ For each $S[j]$ in $S_\ell$ and $1\leq i\leq k$, we compute the following:
\Statex \quad\ \ \ \ \ \ $M_+[j,i]$: position in $S$ of the predecessor of $S[j]$ in its $(k,i)_{+}$-rollercoaster
\Statex \quad\ \ \ \ \ \ $L_+[j,i]$: length of the respective $(k,i)_{+}$-rollercoaster
\For{$2\leq i\leq k$}
\For{$1\leq d \leq 4$ and each $S[j]\in S_{\ell}$ in the order of their occurrences in $P_{\ell}$}
\State Find $M^d_+[j,i]$ and $L^d_+[j,i]$.
\EndFor
\For{each $S[j]\in S_{\ell}$}
\State $L_{+}[j,i] \gets \max \{  L^{d}_{+}[j,i] : 1\leq d \leq 4\}$
\State Set $M_{+}[j,i]$ so that it corresponds to $L_{+}[j,i]$.
\EndFor
\EndFor
\State Compute, for each $S[j] \in S_{\ell}$, $L_{-}[j,i]$ and $M_{-}[j,i]$ with a similar approach.
\MRepeat \ 4 times
\For{each $S[j]\in S_{\ell}$}
\State $L_{+}[j,1] \gets \max \{L_{-}[j,k] ,1\}$
\State $M_{+}[j,1] \gets M_{-}[j,k]$ if $L_{-}[j,k] > 0$ and $0$ otherwise
\EndFor
\For {$2\leq i\leq k$} 
\For{each $S[j]\in S_{\ell}$ in the order of their occurrences in $P_{\ell}$}
\State Find $M'_+[j,i]$, $L'_+[j,i]$ using decomposition of $S_{\ell}$ into $k$ monotone sequences.
\EndFor
\For{each $S[j]\in S_{\ell}$}
\State $L_{+}[j,i] \gets \max\{  L_{+}[j,i], L'_{+}[j,i] \}$
\State Update $M_{+}[j,i]$ so that it corresponds to $L_{+}[j,i]$.
\EndFor
\EndFor
\State Update, for each $S[j] \in S_{\ell}$, $L_{-}[j,i]$ and $M_{-}[j,i]$ with a similar approach.
\EndRepeat
\EndFor
\State \Return $\max\{ \max\{ L_-[j,k],L_+[j,k] \} : 1 \leq j \leq n \}$
\end{algorithmic}
\end{algorithm}

When we begin computing the arrays $M_+[\cdot,i]$, $L_+[\cdot,i]$, $M_-[\cdot,i]$ and $L_-[\cdot,i]$, for $2\leq i\leq k$,
corresponding to all $S[j] \in S_{\ell}$, we have already computed
$M_+[j',1]$, $M_+[j',2], \ldots,$ $M_+[j',k]$ and $L_+[j',1]$, $L_+[j',2], \ldots,$ $L_+[j',k]$, as well as
$M_-[j',1]$, $M_-[j',2], \ldots,$ $M_-[j',k]$ and $L_-[j',1]$, $L_-[j',2], \ldots,$ $L_-[j',k]$, for every $S[j']\in S_{\ell'}$
such that $\ell'<\ell$.

We start with computing the values $M_+[\cdot,i]$, $L_+[\cdot,i]$, $M_-[\cdot,i]$ and $L_-[\cdot,i]$, for $2\leq i\leq k$,
assuming that the predecessor $S[j']$ of $S[j]$ in its corresponding rollercoaster belongs to $S_{\ell-d}$, for some $1\leq d \leq 4$.
In such case the longest rollercoaster ending at $S[j']$ has been already correctly determined and the
computation is quite straightforward.
If $S[j']$ also belongs to $S_{\ell}$, we must be more careful to guarantee that the longest rollercoaster ending
at $S[j']$ is already known.
We proceed in iterations. In the $t^\text{th}$ iteration, we guarantee to compute the values such that
at most $t$ runs of the corresponding rollercoaster contain elements from $S_{\ell}$.
By Proposition~\ref{prop:alternations}, four iterations are enough. 
In a single iteration, we start with computing the initial values $M_+[\cdot,1]$, $L_+[\cdot,1]$, $M_-[\cdot,1]$ and $L_-[\cdot,1]$
corresponding to $S[j]$ being the first element of its run.
These values can be simply copied from the already known 
$M_-[\cdot,k]$, $L_-[\cdot,k]$, $M_+[\cdot,k]$ and $L_+[\cdot,k]$ corresponding to $S[j]$ being the last element
of a rollercoaster with less than $t$ runs containing elements from $S_{\ell}$ (or set to 1 corresponding to $S[j]$ being the only
element in the rollercoaster).
This is correct because a $(k,1)_{+}$-rollercoaster is actually either a $(k,k)_{-}$-rollercoaster or a sequence consisting of a single element.
Then, we calculate the values $M_+[\cdot,i]$, $L_+[\cdot,i]$, $M_-[\cdot,i]$ and $L_-[\cdot,i]$, for $2\leq i<k$,
such that the predecessor $S[j']\in S_{\ell}$ belongs to the same run as $S[j]$.
By performing the calculation for $i=2,3\ldots,k-1$ in this order we guarantee that the longest rollercoaster
ending at the predecessor $S[j']\in S_{\ell}$ is already known for all $S[j]\in S_{\ell}$, but the computation is
still not completely trivial and requires a different approach depending on whether $S_{\ell}$ was decomposed
into at most $k$ increasing, respectively decreasing, subsequences.
Finally, we extend this to $i=k$.

\subparagraph{$M_+[j,i]$ belongs to $S_{\ell-d}$ for some $1\leq d \leq 4$. } We process $S_{\ell-d}$ to identify
some candidates, denoted $M^d_+[j,i]$ and $L^d_+[j,i]$, for $M_+[j,i]$ and $L_+[j,i]$, respectively, for every $S[j] \in S_{\ell}$.
The idea is to compute these candidates in the order in which the elements $S[j]$ occur on the sorted list $P_{\ell}$.
So, let us consider $P_{\ell}$ and $P_{\ell-d}$. For each element $S[j]$ in the current part we want to identify
a longest $(k,i-1)_{+}$-rollercoaster ending in $S_{\ell-d}$ with an element less than $S[j]$. Thus, as $P_{\ell-d}$ is increasing,
for every element of the current part we need to consider all elements in a prefix of $P_{\ell-d}$. 
Also, if $S[j']$ is to the right of $S[j]$ in $P_\ell$, that is, $S[j'] \geq S[j]$, then the prefix of $P_{\ell-d}$
that we need to consider to compute $M^d_+[j',i]$ is at least as long as the prefix that we need to consider to
compute $M^d_+[j,i]$. Therefore, we can use two pointers to sweep through $P_\ell$ and $P_{\ell-d}$ from left to right, and
obtain the information needed to compute $M^d_+[j,i]$ and $L^d_+[j,i]$, for every $S[j]\in S_\ell$. At the beginning the
pointers point to the first element of $P_\ell$ and $P_{\ell-d}$, respectively. Say that the current element in $P_{\ell}$
and $P_{\ell-d}$ is $S[j]$ and $S[h]$, respectively (we update indices $j$ and $h$ along with the pointers).
We keep moving forward the pointer corresponding to $S[h]$ until we find an element $S[h]>S[j]$. Then we set $M^d_+[j,i]=h'$
and $L^d_+[j,i]=L_+[h',i]+1$, where $S[h']$ is an element occurring earlier than $S[h]$ in $P_{\ell-d}$ with the largest value of
$L_{+}[h',i-1]$. The element $S[h']$ is maintained as we move from left to right in $P_{\ell-d}$. Then we proceed to the
next element in $P_{\ell}$.
Overall, computing candidates $M^d_+[j,i]$ and $L^d_+[j,i]$, for every $S[j]\in S_\ell$, takes $O(|S_{\ell-d}|+|S_{\ell}|)$ time.

\subparagraph{$M_+[j,i]$ belongs to $S_{\ell}$ decomposed into $k$ increasing subsequences. }
Recall that we have already computed $M_+[j',i']$ and $L_+[j',i']$ for every $i'<i$ and $S[j']\in S_\ell$,
and the goal is to identify candidates, denoted $M'_+[j,i]$ and $L'_+[j,i]$, for $M_+[j,i]$ and $L_+[j,i]$, respectively, for every
$S[j]\in S_\ell$. Consider the decomposition of $S_{\ell}$ into $k$ increasing subsequences $I_{1},I_{2},\ldots,I_{k}$.
The elements of every sequence are increasing w.r.t. their values and w.r.t their positions in $S$. 
Consider an element $S[j] \in I_{a}$ and $1\leq b\leq k$ (possibly $a=b$). The elements of $I_{b}$ that can be
the predecessor of $S[j]$ in a $(k,i)_{+}$-rollercoaster (that is, possible candidates for $M_{+}[j,i]$) are both less w.r.t. value and w.r.t. position in $S$.
Thus, these elements form a prefix of $I_{b}$, and for every $S[j] \in I_{a}$ and $1\leq b\leq k$ we want to
maximise $L_{+}[h',i]$ over all $S[h']$ in such a prefix. As in the previous case, we can use two pointers to
sweep through $I_{a}$ and $I_{b}$ and compute, for every $S[j] \in I_{a}$, the element $S[h']\in I_{b}$ that
could precede $S[j]$ in a $(k,i)_{+}$-rollercoaster with the largest value of $L_{+}[h',i-1]$. Finally, we set
$M'_+[j,i]$ and $L'_+[j,i]$ to correspond to the largest such value among all $1\leq b\leq k$.
Overall, computing the candidates $M'_+[j,i]$ and $L'_+[j,i]$, for every $S[j]\in S_\ell$, takes $O(k|S_{\ell}|)$ time.

\subparagraph{$M_+[j,i]$ belongs to $S_{\ell}$ decomposed into $k$ decreasing subsequences.}
This is the most complicated case. Recall that the decomposition into $k$ decreasing subsequences $D_{1},D_{2},\ldots,D_{k}$
was obtained with Algorithm~\ref{alg:lis}. In more detail, $D_{a}$ consists of elements assigned to $R[a]$ throughout
the execution of the algorithm. Thus, if $S[j] \in D_{a}$ then the predecessor of $S[j]$ in a sought longest
$(k,i)_{+}$-rollercoaster, denoted $S[j']$, must belong to $D_{b}$ for some $1\leq b < a$. Indeed,
Algorithm~\ref{alg:lis} first processes $S[j']$ and then $S[j]$, so if $S[j'] \in D_{b}$ then $R[b] \leq S[j']$
when processing $S[j]$ and consequently $S[j'] < S[j]$ implies that $S[j]$ is assigned to $R[a]$ with $a>b$.
So, we first compute the candidates $M'_+[j,i]$ and $L'_+[j,i]$ for every $S[j] \in D_{1}$, then for every $S[j] \in D_{2}$,
and so on.
That is, consider a decreasing subsequence $D_a$ and suppose that we have already computed the desired result for all elements
in $D_{1}, D_{2}, \ldots, D_{a-1}$. Note that at this point we have already computed, for every $S[j] \in D_{1} \cup \ldots \cup D_{a-1}$,
the values of $M^d_+[j,i]$ and $L^d_+[j,i]$, for $1\leq d\leq 4$, as well as the values $M'_+[j,i]$ and $L'_+[j,i]$
corresponding to the current iteration. Thus, we
are already able to set $M_+[j,i]$ and $L_+[j,i]$ by choosing the option that maximises the length of the corresponding $(k,i)_{+}$-rollercoaster,
which is important when extending this case to $i=k$.

Consider an element $S[j] \in D_{a}$ and $1\leq b < a$. 
The elements of $D_{b}$ that can be the predecessor of $S[j]$ in a $(k,i)_{+}$-rollercoaster (that is, possible candidates for $M_{+}[j,i]$)
are both less w.r.t. value and w.r.t. position in $S$, similarly as in the previous case.
The difference is that now these elements form contiguous subsequence $X$ of $D_{b}$ that is not necessarily a prefix.
The first element of $X$ can be found by searching for the first element with sufficiently small value, while
its last element can be found by searching the last element with sufficiently small position (note that $X$ might be empty).
Let $S[j']$ be the next element after $S[j]$ in $D_{a}$, and $Y$ be its corresponding contiguous subsequence of $D_{b}$
consisting of possible predecessors in a $(k,i)_{+}$-rollercoaster. Clearly, $S[j] > S[j']$ while $j < j'$.
Thus, the first element of $Y$ is either the same as the first element of $X$ or occurs after the first element of $X$
in $D_{a}$, while the last element of $Y$ is either the same as the last element of $X$ or occurs after the last element
of $X$ in $D_{a}$ (assuming that both $X$ and $Y$ are non-empty). Thus, we sweep through $D_{a}$ while maintaining the current contiguous subsequence $X$ of $D_{b}$
corresponding to the possible predecessors of the current $S[j]\in D_{a}$. This requires the following tool.
\begin{lemma}[\cite{Tarjan}]
\label{lem:deque}
There is a data structure that maintains a list of elements under the following operations: pop an element from the front, push
an element in the back, and return the maximum element in the current list, each in $O(1)$ time.
\end{lemma}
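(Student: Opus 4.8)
The plan is to build the structure from a simpler primitive — a \emph{stack} supporting push, pop, and ``return maximum'' in $O(1)$ time — and then to simulate a FIFO list by two such stacks, which is precisely the situation whose amortised analysis is carried out in~\cite{Tarjan}.

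First I would implement the stack with a maximum query. Each stack entry stores not only its element $v$ but also the maximum of $v$ together with all entries currently below it in the stack; on a push this auxiliary value is computed in $O(1)$ from the old top's stored maximum, a pop simply discards the top entry, and the maximum of the whole stack is read off from the top entry (or is $-\infty$ when the stack is empty). All three operations are then $O(1)$ in the worst case.

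Next I would realise the list as a pair of such stacks: a ``back'' stack $B$, onto which every pushed element is placed, and a ``front'' stack $F$, from which elements are popped. Pushing at the back just pushes onto $B$. To pop from the front: if $F$ is non-empty, pop its top; otherwise first move every element of $B$ onto $F$ one at a time — this reverses the order, so the oldest element of the list ends up on top of $F$ — emptying $B$, and then pop from $F$. The maximum of the list is the larger of the maxima of $F$ and of $B$, each available in $O(1)$. Since every element is pushed onto $B$ once, transferred to $F$ at most once, and popped from $F$ once, a standard potential argument with potential equal to the size of $B$ shows that the amortised cost of each operation is $O(1)$, which is all our applications require, as there only the total running time matters. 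If a genuine worst-case bound were wanted, the transfer of $B$ into $F$ could instead be spread out, a constant number of element moves per operation, by the usual real-time queue simulation; we do not need this refinement.

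The only delicate point is the correctness of the maximum query across the two-stack representation: one checks that $F$ and $B$ together always hold exactly the elements currently in the list, and that neither stack is left in an inconsistent state — which in the lazy (amortised) variant is immediate, because a transfer, once begun, is always run to completion before the next operation. I therefore expect no real obstacle here; the lemma is a textbook application of amortisation, and I would keep the write-up to essentially the paragraph above, citing~\cite{Tarjan} for the amortised analysis.
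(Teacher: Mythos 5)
Your construction is correct and is exactly the standard argument behind the citation: the paper itself states this lemma with a reference to~\cite{Tarjan} and gives no proof, and the max-annotated stack plus two-stack queue simulation is precisely what that reference supplies. You also correctly handle the one delicate point, namely that the bound is amortised rather than worst-case, which suffices for the paper's use of the structure (inside a left-to-right sweep where only the total time matters).
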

When processing the current element $S[j] \in D_{a}$ we maintain the first element $S[f]\in D_{b}$ such that $S[f] < S[j]$
and the last element $S[\ell]\in D_{b}$ such that $\ell < j$. Then $X$ consists of all elements between $S[f]$
and $S[\ell]$ in $D_{b}$ (inclusive), and is maintained in a structure from Lemma~\ref{lem:deque} storing
the lengths of their corresponding $(k,i)_{+}$-rollercoaster, that is, the already
known value of $L_+[\cdot,i-1]$. This allows us to extract the element $S[j'] \in X$ with the largest value of
$L_+[j',i-1]$, and set $M'_{+}[j,i]=j'$ and $L'_{+}[j,i]=L_{+}[j',i-1]+1$ in constant time, while updating $f$ and $\ell$
takes amortised constant time. 
Overall, computing the candidates $M'_+[j,i]$ and $L'_+[j,i]$, for every $S[j]\in S_\ell$, takes $O(k|S_{\ell}|)$ time.

\subparagraph{Case $i=k.$ }
To compute $M_+[j,k]$ and $L_+[j,k]$, we first use exactly the same approach as before
for $i=k$, so consider the values of $M_+[\cdot,k-1]$ and $L_+[\cdot,k-1]$. But this only allows us to compute
the length of a longest $(k,k)_{+}$-rollercoaster with the last run of length exactly $k$. To extend this to
arbitrary $(k,k)_{+}$-rollercoasters with the last run of length greater than $k$ we additionally run the same
algorithm but instead of looking at $M_+[\cdot,k-1]$ and $L_+[\cdot,k-1]$ we use $M_+[\cdot,k]$ and $L_+[\cdot,k]$,
including the values already computed in this extra step in the third case.
The reason why this works is that, due to the order in which we consider the elements of $S_\ell$, at the moment when
we compute the length of a longest $(k,k)_{+}$-rollercoaster ending with $S[j]$, and which may have more than $k$ elements
in the final run, we have already computed the length of a longest $(k,k)_{+}$-rollercoaster ending with any element $S[j']$ which
may be a predecessor of $S[j]$ on the respective $(k,k)_{+}$-rollercoaster.
 
\subparagraph{Conclusion.} With these final remarks, our algorithm is completely described. It only remains to find the element $S[j]$ for which $\max\{L_-[j,k],L_+[j,k]\}$ is maximum.
The correctness follows from the comments made throughout its description. To compute the complexity, it is enough
to note that each part $S_{\ell}$ of the partition of $S$ is processed in $O(k|S_\ell|)$ time, for each $2\leq i\leq k$. Adding
this up, we get that the total complexity of our algorithm is $O(nk^2)$.

\begin{theorem}\label{nk-squared}
For every sequence $S[1:n]$ and $k\geq 3$, the length of a longest $k$-rollercoaster in $S$ can be found in $O(nk^2)$-time.
\end{theorem}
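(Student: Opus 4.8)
The plan is to wrap up the algorithm described above (Algorithm~\ref{alg:nk2}) into a correctness proof together with a running‑time bound, with the dynamic‑programming tables $L_{\pm}[\cdot,\cdot]$ and $M_{\pm}[\cdot,\cdot]$ as the central object. First I would pin down the invariant they should satisfy: $L_{+}[j,i]$ equals the length of a longest $(k,i)_{+}$-rollercoaster ending at $S[j]$, and $M_{+}[j,i]$ is the position of a predecessor realising it, with the symmetric statement for the ``$-$'' tables. Since a $(k,i)_{+}$-rollercoaster with $i\geq 2$ is obtained from a $(k,i-1)_{+}$-rollercoaster (and for $i=k$ also from a $(k,k)_{+}$-rollercoaster) by appending a single larger element, and a $(k,1)_{+}$-rollercoaster is either a single element or a $(k,k)_{-}$-rollercoaster, the recurrence is immediate once one knows, for each $S[j]$ and $i$, a set of candidate predecessors guaranteed to contain an optimal one. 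Correctness of the returned value $\max_j\max\{L_-[j,k],L_+[j,k]\}$ then reduces to checking that the candidate sets the algorithm inspects are indeed rich enough, and that the order of computation makes the table entries a candidate needs available in time.

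The structural core of this is Proposition~\ref{prop:alternations} and Lemma~\ref{lem:split}. From Lemma~\ref{lem:split} (and its decreasing‑run mirror) I would conclude that the predecessor $S[j']$ of $S[j]$ in a longest $k$-rollercoaster lies either in the same part $S_\ell$ as $S[j]$ or in one of $S_{\ell-1},\dots,S_{\ell-4}$, which justifies restricting the search to these five parts. From Proposition~\ref{prop:alternations} I would justify the outer ``repeat $4$ times'' loop: in the $t$-th pass we correctly finalise every table entry whose optimal rollercoaster has at most $t$ runs meeting $S_\ell$, so after four passes all entries are final. Inside one pass I would verify the dependency order: processing $d=1,\dots,4$ first (those parts are already finalised), then $i=2,3,\dots,k-1$ and finally the $i=k$ extension, and, within the subcase of the decreasing decomposition, processing $D_1,D_2,\dots$ in order, guarantees that whenever we read $L_{+}[j',i-1]$ (or $L_{+}[j',k]$ in the $i=k$ extension) it has already been computed — here one uses that, by construction of the decomposition via Algorithm~\ref{alg:lis}, the predecessor of an element of $D_a$ must lie in some $D_b$ with $b<a$.

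For the candidate‑finding subroutines I would treat the three cases separately. If $S[j']\in S_{\ell-d}$, the admissible predecessors form a prefix of the sorted increasing list $P_{\ell-d}$ whose length is monotone in the value of $S[j]$, so a two‑pointer sweep over $P_\ell$ and $P_{\ell-d}$ maintaining a running maximum of $L_{+}[\cdot,i-1]$ produces all answers in $O(|S_\ell|+|S_{\ell-d}|)$ time. If $S[j']\in S_\ell$ and $S_\ell$ is split into $k$ increasing subsequences $I_1,\dots,I_k$, the admissible predecessors in each $I_b$ again form a prefix (monotone in $S[j]$ in both value and position), so $k$ parallel two‑pointer sweeps give the answer in $O(k|S_\ell|)$ time. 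The delicate case, which I expect to be the main obstacle, is $S[j']\in S_\ell$ with $S_\ell$ split into $k$ decreasing subsequences $D_1,\dots,D_k$: now the admissible predecessors in each $D_b$ form a contiguous block of $D_b$ that is in general neither a prefix nor a suffix. I would show that as $S[j]$ advances along $D_a$ — so its value decreases while its position increases — both endpoints of this block move monotonically forward in $D_b$, so it behaves like a sliding window, and the maximum of $L_{+}[\cdot,i-1]$ over it can be maintained with the deque of Lemma~\ref{lem:deque} in $O(k|S_\ell|)$ amortised time.

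The remaining arithmetic is routine: the preprocessing costs $O(n\log k)$ by Lemma~\ref{lem:decompose}, and the $O(n\log k)$ merges produce the sorted lists $P_\ell$; then each part $S_\ell$ is processed, for each of the $k$ values of $i$ and each of the $O(1)$ passes, in $O(k|S_\ell|)$ time for the same‑part cases and $O(|S_\ell|+\sum_{d=1}^{4}|S_{\ell-d}|)$ for the cross‑part cases. Summing over all parts gives $\sum_\ell O(k^2|S_\ell|)=O(nk^2)$ for the cross‑part contribution as well, since each $S_{\ell-d}$ is charged $O(1)$ times, and this dominates the preprocessing, establishing the claimed $O(nk^2)$ bound of Theorem~\ref{nk-squared}.
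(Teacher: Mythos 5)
Your proposal is correct and follows essentially the same route as the paper: the same dynamic program over $(k,i)_{\pm}$-rollercoasters, the same use of Lemma~\ref{lem:split} and Proposition~\ref{prop:alternations} to bound the candidate parts and the number of passes, the same three two-pointer/deque subroutines, and the same $\sum_\ell O(k^2|S_\ell|)$ accounting. Nothing to add.
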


\section{Computing a Longest $k$-Rollercoaster in 
$O(n \log^2{n})$-time}

Before we describe our algorithm, we introduce two preliminary procedures. Firstly, we introduce the definition of an anti-Monge matrix
and the algorithm for finding the maximum in every column of such a matrix. Secondly, we describe the algorithm for finding LIS
in contiguous subsequences of the input sequence. Finally, we describe the algorithm computing a longest $k$-rollercoaster in this 
sequence, using the previously developed tools as black~boxes.

\subparagraph{Monge matrices.}
\label{ss:monge}
Let $A$ be an $n\times n$ matrix, and $A[i, j]$ denote its element in the $i^\text{th}$ row from the top and the $j^\text{th}$ column from the left.
$A$ is {\em Monge} (respectively, {\em anti-Monge}) if, for every $1 \leq i < j \leq n$ and $1 \leq k < \ell \leq n$, the {\em Monge equality} holds,
namely $A[i, k] + A[j, \ell] \leq A[i, \ell] + A[j, k]$ (respectively, $A[i, k] + A[j, \ell] \geq A[i, \ell] + A[j, k]$). 
An $n \times n$ {\em falling staircase anti-Monge matrix} is a matrix with blanks such that for every blank all elements below and to the left are blanks,
and the anti-Monge inequality holds whenever the four concerned elements are non-blank.
Similarly, an $n \times n$ {\em reverse falling staircase anti-Monge matrix} is a matrix with blanks such that for every blank all elements above and to the right are blanks, and the anti-Monge inequality holds whenever the four concerned elements are non-blank.
Finally, an $n\times n$ matrix $A$ is {\em totally monotone} if, for every $1 \leq i < j \leq n$ and $1 \leq k < \ell \leq n$,
$A[i,k] \leq A[i,\ell]$ implies $A[j,k] \leq A[j,\ell]$.
\begin{figure}[htb]
\centering
\begin{minipage}{0.32\textwidth}
\centering
\begin{tabular}{|c|c|c|c|c|}
  \hline 
  0 & 1 & 2 & 2 & 2\\
  \hline
  -1 & 0 & 1 & 1 & 2\\
  \hline
  -2 & -1 & 0 & 1 & 2\\
  \hline
  -3 & -2 & -1 & 0 & 1\\
  \hline
  -4 & -3 & -2 & -1 & 0\\
  \hline
\end{tabular}
\end{minipage}
\hfill
\begin{minipage}{0.32\textwidth}
\centering
\begin{tabular}{|c|c|c|c|c|}
  \hline 
    &   &   &   &  \\
  \hline
   4 &  &  &  & \\
  \hline
   3 & 2 & &  & \\
  \hline
   2 &  1 &  & & \\
  \hline
   4 & 4 & 2 &  6 & \\
  \hline
\end{tabular}
\end{minipage}
\hfill
\begin{minipage}{0.32\textwidth}
\centering
\begin{tabular}{|c|c|c|c|c|}
  \hline 
   & 1 & 2 & 2 & 2\\
  \hline
   &  &  & 1 & 2\\
  \hline
   &  &  & 1 & 2\\
  \hline
   &  &  &  & 1\\
  \hline
   &  &  &  & \\
  \hline
\end{tabular}
\end{minipage}
\caption{Anti-Monge matrix, reverse falling staircase anti-Monge matrix, and falling staircase anti-Monge matrix.}
\label{tab:monge}
\end{figure}

Let us now recall some basic facts regarding Monge matrices.
\begin{observation}
\label{obs:mon1}
Adding the same value to every element in a row (or a column) of an anti-Monge matrix results in an anti-Monge matrix.
\end{observation}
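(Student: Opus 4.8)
The plan is a direct verification from the definition. Write $A'$ for the matrix obtained from $A$ by adding a fixed constant $c$ to every entry of some fixed row $r$, and fix indices $1 \leq i < j \leq n$ and $1 \leq k < \ell \leq n$; the goal is to show $A'[i,k] + A'[j,\ell] \geq A'[i,\ell] + A'[j,k]$, knowing the same inequality for $A$. Since $i \neq j$, at most one of the two rows appearing among the four relevant entries can equal $r$, so I would split into three cases: $r \notin \{i,j\}$, $r = i$, and $r = j$.

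If $r \notin \{i,j\}$, none of the four entries changes and the inequality for $A'$ is literally the inequality for $A$. If $r = i$, then $A'[i,k] = A[i,k] + c$ and $A'[i,\ell] = A[i,\ell] + c$, while $A'[j,k] = A[j,k]$ and $A'[j,\ell] = A[j,\ell]$; hence the summand $c$ appears exactly once on the left-hand side (in $A'[i,k]$) and exactly once on the right-hand side (in $A'[i,\ell]$), so it cancels and the claim reduces to the anti-Monge inequality for $A$. The case $r = j$ is symmetric: now $c$ appears in $A'[j,\ell]$ on the left and in $A'[j,k]$ on the right, again once on each side, and cancels. For a column in place of a row, I would note that the transpose of an anti-Monge matrix is again anti-Monge — immediate from the definition after relabelling the index pairs — so adding $c$ to a column of $A$ is the same as adding $c$ to a row of $A^\top$, and the row case just settled finishes the argument.

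There is no genuine obstacle here; the only point requiring any care is the bookkeeping of which of the four entries lies in the modified row (or column) and the observation that the resulting shift always lands once on each side of the inequality, so that it cancels rather than, say, doubling on one side.
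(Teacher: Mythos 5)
Your verification is correct and is exactly the routine check the paper has in mind: the Observation is stated without proof precisely because the added constant $c$ lands once on each side of the anti-Monge inequality (whether the modified line is row $i$, row $j$, or neither) and cancels, with the column case following by transposition. Nothing to add.
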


\begin{observation}
\label{obs:mon2}
To check if an array is anti-Monge it is sufficient to check if every contiguous $2 \times 2$ submatrix is anti-Monge.
\end{observation}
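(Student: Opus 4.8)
The plan is to prove the statement by a two-stage telescoping argument that reduces the general anti-Monge inequality to the assumed special case for contiguous $2\times 2$ submatrices. Concretely, suppose that for every $1\le i<n$ and $1\le k<n$ we have $A[i,k]+A[i+1,k+1]\ge A[i,k+1]+A[i+1,k]$; the goal is to derive $A[i,k]+A[j,\ell]\ge A[i,\ell]+A[j,k]$ for arbitrary $1\le i<j\le n$ and $1\le k<\ell\le n$.

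First I would fix two \emph{consecutive} rows $i$ and $i+1$ and establish the intermediate claim that $A[i,k]+A[i+1,\ell]\ge A[i,\ell]+A[i+1,k]$ for all $k<\ell$. This follows by summing the hypothesis over the contiguous column pairs strictly between $k$ and $\ell$: adding the inequalities $A[i,m]+A[i+1,m+1]\ge A[i,m+1]+A[i+1,m]$ for $m=k,k+1,\dots,\ell-1$, the terms $A[i,m+1]$ and $A[i+1,m]$ for $k<m<\ell$ appear once on each side and cancel, leaving exactly the desired inequality; equivalently, one argues by induction on $\ell-k$, the base case $\ell=k+1$ being the hypothesis itself.

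Next I would fix two arbitrary columns $k<\ell$ and repeat the same telescoping over rows, now using the intermediate claim (applied to each consecutive row pair) as the base case. Summing $A[m,k]+A[m+1,\ell]\ge A[m,\ell]+A[m+1,k]$ for $m=i,i+1,\dots,j-1$ and cancelling the common terms $A[m,\ell]$ and $A[m+1,k]$ for $i<m<j$ yields $A[i,k]+A[j,\ell]\ge A[i,\ell]+A[j,k]$, which is the full anti-Monge inequality. The Monge case is handled identically, with every inequality reversed.

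There is no genuine obstacle here; the only point that needs a little care is the bookkeeping in the two telescoping sums — in particular, phrasing the intermediate claim for \emph{every} pair of consecutive rows, so that the second stage can chain these inequalities across rows $i,i+1,\dots,j$, and checking that the cancellations in each sum occur with the correct signs. Once the intermediate claim is stated in this uniform way, the second stage is formally identical to the first.
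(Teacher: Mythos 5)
Your proof is correct. The paper states this observation without proof, treating it as a folklore fact, so there is nothing to compare against; your double telescoping argument --- first summing the contiguous $2\times 2$ inequalities across consecutive column pairs to get the inequality for consecutive rows and arbitrary columns, then summing that across consecutive row pairs --- is exactly the standard justification, and the cancellations work out as you describe.
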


The following lemma follows from the well-known SMAWK algorithm~\cite{SMAWK}.
\begin{lemma}[Lemma 3.3 in Aggarwal et al.~\cite{AaK}]
\label{lem:ak}
All row maxima in a reverse falling staircase totally monotone matrix can be found in $O(n)$ time.
\end{lemma}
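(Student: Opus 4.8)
The plan is to derive this as the staircase version of the classical matrix-searching problem, that is, by adapting the SMAWK algorithm~\cite{SMAWK} in the manner of Aggarwal and Klawe~\cite{AaK} (whence the attribution). First I would record the shape of the input: in a reverse falling staircase matrix a blank at $(i,j)$ forces a blank at every $(i',j')$ with $i'\le i$ and $j'\ge j$, so the non-blank entries of each row $i$ form a prefix $[1,r_i]$ of the columns, with $r_1\le r_2\le\cdots\le r_n$. Applying total monotonicity only to quadruples of non-blank entries, the rightmost column $c_i$ attaining the maximum of row $i$ is nondecreasing in $i$, and $c_i\le r_i$; these two monotonicities are all that the algorithm exploits.

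With that in hand I would run the usual divide-and-conquer. The \emph{reduce} step scans the columns from left to right, keeping a stack of columns that could still be the maximum of some row; it compares two columns only in rows where both entries are non-blank, and discards a column as soon as it loses such a comparison to a later one. I expect this to leave at most as many surviving columns as there are rows in the current subproblem, in $O(n)$ time. The \emph{recurse} step passes to the even-indexed rows of the reduced matrix, which again form a reverse falling staircase totally monotone matrix (the prefix lengths stay nondecreasing) with half as many rows, and solves it recursively. The \emph{interpolate} step then locates the maximum of each odd row $i$, which by monotonicity lies in the window $[c_{i-1},\min(c_{i+1},r_i)]$ bounded by the already computed maxima of the neighbouring even rows; sweeping these windows touches every column $O(1)$ times amortised, hence $O(n)$ in total. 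Since \emph{reduce} keeps the number of columns below the number of rows, the recursion is $T(n)=O(n)+T(n/2)=O(n)$, which gives the claimed bound.

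The step I expect to be the genuine obstacle is \emph{reduce} in the presence of blanks: a column can be non-blank in some rows and blank in others, so the invariant ``the $t$-th surviving column is the unique maximum candidate over a contiguous band of rows'' must be re-established using only legal (non-blank) comparisons, and one must check that discarding a column never throws away the true maximum of a later row. This is exactly where the reverse falling staircase hypothesis --- rather than an arbitrary pattern of blanks --- is used, and it is precisely what is carried out in~\cite{AaK}; I would follow that argument rather than redo it from scratch.
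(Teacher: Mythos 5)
The paper offers no proof of this lemma at all---it is imported verbatim as Lemma 3.3 of Aggarwal et al.~\cite{AaK}, with only the remark that it follows from the SMAWK algorithm~\cite{SMAWK}---and your reconstruction of the reduce/recurse/interpolate scheme, together with the correct structural observations (non-blank prefixes of nondecreasing length $r_1\le\cdots\le r_n$, monotone row-maximum positions $c_i\le r_i$), is consistent with, and more detailed than, what the paper does. You also rightly isolate the \emph{reduce} step over a partially blank column set as the one genuinely delicate point and defer it to~\cite{AaK}; this is precisely where the orientation of the staircase matters (the opposite orientation is only known to admit roughly $O(n\,\alpha(n))$ time), and deferring that case analysis to the cited source matches the paper's own treatment.
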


By transposing the matrix and observing that being anti-Monge implies being
totally monotone we obtain the following.
\begin{corollary}
\label{cor:am}
All column maxima in a falling staircase anti-Monge matrix can be found in $O(n)$ time.
\end{corollary}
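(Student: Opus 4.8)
The plan is to reduce the statement to Lemma~\ref{lem:ak} by transposing. Let $A$ be an $n\times n$ falling staircase anti-Monge matrix, and let $A^{T}$ denote its transpose, so that $A^{T}[i,j]=A[j,i]$. Then the column maxima of $A$ are, entry for entry, the row maxima of $A^{T}$, so it suffices to verify that $A^{T}$ is a reverse falling staircase totally monotone matrix and then invoke Lemma~\ref{lem:ak} to read off all its row maxima in $O(n)$ time.

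First I would check the staircase shape. A cell $(i,j)$ is blank in $A^{T}$ iff $(j,i)$ is blank in $A$, and the set of cells ``above and to the right of $(i,j)$'' in $A^{T}$ is exactly the set of cells ``below and to the left of $(j,i)$'' in $A$. Hence the requirement defining a falling staircase matrix for $A$ (for every blank, all cells below and to the left are blank) is precisely the requirement defining a reverse falling staircase matrix for $A^{T}$ (for every blank, all cells above and to the right are blank), and nothing else needs to be done here. Next I would derive total monotonicity of $A^{T}$ from the anti-Monge inequality of $A$ by a one-line rearrangement: for $r<r'$ and $c<c'$, applying the anti-Monge inequality of $A$ with rows $c<c'$ and columns $r<r'$ gives $A[c,r]+A[c',r']\geq A[c,r']+A[c',r]$, i.e.\ $A^{T}[r',c']-A^{T}[r',c]\geq A^{T}[r,c']-A^{T}[r,c]$; consequently $A^{T}[r,c]\leq A^{T}[r,c']$ forces $A^{T}[r',c]\leq A^{T}[r',c']$, which is exactly the definition of total monotonicity used in Lemma~\ref{lem:ak}. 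As usual, the anti-Monge inequality is only asserted for non-blank quadruples, and that is all the lemma requires; Observation~\ref{obs:mon2} is not needed here.

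I do not expect a genuine obstacle in this argument — it is essentially bookkeeping — but the one point that deserves care is getting the orientations right: one must confirm that a \emph{falling} staircase transposes to a \emph{reverse} falling staircase (and not to another falling staircase), and that the direction of the total-monotonicity implication obtained above is the one demanded by Lemma~\ref{lem:ak} (the ``maxima'' version). Once $A^{T}$ has been certified as a reverse falling staircase totally monotone matrix, Lemma~\ref{lem:ak} produces all of its row maxima in $O(n)$ time, and translating back via $A^{T}[i,j]=A[j,i]$ yields all column maxima of $A$ in $O(n)$ time, as claimed.
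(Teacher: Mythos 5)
Your proof is correct and follows exactly the paper's route: the paper's own justification of Corollary~\ref{cor:am} is the one-line remark that one transposes the matrix and observes that anti-Monge implies totally monotone, then applies Lemma~\ref{lem:ak}. You have merely written out the bookkeeping (staircase orientation under transposition and the rearrangement of the anti-Monge inequality) that the paper leaves implicit, and you did it correctly.
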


\subparagraph{LIS-in-range queries.}
\label{ss:lis}
Let $S[1:n]$ be the input sequence. Define $M$ as an $(n+1) \times (n+1)$ matrix with $0$-indexed rows and columns, such that
$M[i, j]$ is the length of LIS in $S[i+1:j]$ for $i < j$ and $M[i, j] = j-i$ otherwise (the anti-Monge matrix in Figure~\ref{tab:monge} is
such a matrix for the sequence $(3, 4, 1, 2)$). As hinted by our example, this matrix turns out to have a rather special structure as observed by Tiskin~\cite{Tiskin}. We describe this structure in the following.

Let $S'$ be the sequence obtained by sorting $S$ (recall that $S$ consists of distinct elements), and observe that LIS of $S$ is the same as a longest
common subsequence (LCS, for short) of $S$ and $S'$. Thus, we can think that $M[i, j]$ is LCS of $S'$ and $S[i +1 : j]$. As such, the following result can be shown (see~\cite{Tiskin} 
and the references therein).

\begin{restatable}{lem}{lemamon}
\label{lem:amon}
$M$ is anti-Monge.
\end{restatable}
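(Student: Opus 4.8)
The statement to prove is that the matrix $M$, where $M[i,j]$ is the length of LIS in $S[i+1:j]$ for $i<j$ (and $M[i,j]=j-i$ otherwise), is anti-Monge. By Observation~\ref{obs:mon2}, it suffices to check that every contiguous $2\times2$ submatrix is anti-Monge, i.e.\ that for every valid $i,j$,
\[
M[i,j] + M[i+1,j+1] \geq M[i,j+1] + M[i+1,j].
\]
So first I would reduce to this local inequality, and then split into cases depending on whether the indices fall in the region $i<j$ or in the boundary region $i\geq j$.

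**The main case.** The heart of the matter is the generic case $i+1 < j$, where all four entries are genuine LIS-lengths: $M[i,j]$ is LIS of $S[i+1:j]$, $M[i+1,j+1]$ is LIS of $S[i+2:j+1]$, and the two "cross" terms are LIS of $S[i+1:j+1]$ and LIS of $S[i+2:j]$. The natural way to prove the inequality is the standard "uncrossing" argument: take a longest increasing subsequence $A$ of $S[i+1:j+1]$ and a longest increasing subsequence $B$ of $S[i+2:j]$; these are two increasing subsequences of the common range $S[i+2:j]$ together with possibly the extra boundary elements $S[i+1]$ (only in $A$) and $S[j+1]$ (only in $A$). I want to rearrange the elements of $A\cup B$ (as a multiset of positions, though since $S$ has distinct values we can think in terms of values) into one increasing subsequence living in $S[i+1:j]$ and one increasing subsequence living in $S[i+2:j+1]$, without losing any elements — this gives $M[i,j]+M[i+1,j+1] \geq |A|+|B| = M[i,j+1]+M[i+1,j]$. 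Concretely: $A$ may use $S[i+1]$ as its first element and/or $S[j+1]$ as its last; route the prefix of $A$ ending before it would need $S[j+1]$ into the range $[i+1:j]$, and route the rest into $[i+2:j+1]$, splicing $B$ in appropriately so that both resulting sequences are increasing and each omitted boundary element is absent from the range that cannot contain it. Alternatively — and this is probably the cleaner route given the hint in the excerpt — I would invoke the LCS reformulation: $M[i,j]$ equals LCS of $S'$ (the sorted $S$) with $S[i+1:j]$, and it is a classical fact (Tiskin) that the matrix of LCS-lengths of a fixed string against all substrings of another string is anti-Monge; so I would either cite this directly or reproduce the short uncrossing proof in the LCS language, where the two strings are $S'$ fixed and $S[i+1:j]$ varying, and the uncrossing is on alignment paths in the edit-distance grid.

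**The boundary cases and the obstacle.** It remains to handle the cases where some of the four entries lie in the region $i\geq j$, i.e.\ are defined by the formula $j-i$ rather than by an LIS. There are only a few such configurations near the diagonal (e.g.\ $j=i+1$, so $M[i+1,j]=M[i+1,i+1]=0$ and $M[i,j]=M[i,i+1]=1$), and in each the inequality reduces to something elementary like $1 + M[i+1,j+1] \geq M[i,j+1] + 0$, which follows because deleting one element from a range decreases LIS by at most $1$; I would check these directly. The step I expect to be the main obstacle is making the uncrossing argument in the main case fully rigorous: one has to argue carefully that the two boundary elements $S[i+1]$ and $S[j+1]$ can always be "peeled off" and reassigned so that both reconstructed sequences are strictly increasing and each stays within its allowed index range, and that no element is double-counted or dropped. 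This is where the distinctness of the elements of $S$ is used, and where writing it in the LCS/alignment-path picture — where "uncrossing two monotone lattice paths" is a well-rehearsed move — is cleanest. So my proposal is: reduce to the $2\times2$ condition via Observation~\ref{obs:mon2}; dispatch the $O(1)$ diagonal/boundary configurations by hand; and for the generic case either cite Tiskin's result on LCS-against-substrings matrices or give the uncrossing proof in the alignment-path formulation.
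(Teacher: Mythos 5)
Your proposal is correct, and its ``cleaner route'' is in fact exactly what the paper does: the paper gives no standalone proof of Lemma~\ref{lem:amon} at all --- it reformulates $M[i,j]$ as the LCS of the sorted sequence $S'$ with $S[i+1:j]$ and cites Tiskin for the anti-Monge property of such LCS-against-substring matrices. So your reduction via Observation~\ref{obs:mon2} to the contiguous $2\times 2$ check, the by-hand treatment of the diagonal/below-diagonal entries where $M[i,j]=j-i$, and the appeal to Tiskin together reproduce (and slightly expand on) the paper's argument. One caution about your self-contained alternative: the uncrossing as literally sketched --- peel $S[i+1]$ and $S[j+1]$ off the LIS $A$ of $S[i+1:j+1]$ and splice in the LIS $B$ of $S[i+2:j]$ at a single cut point --- is not yet a proof, because a valid splice must be increasing in \emph{both} position and value, and choosing the cut by value alone (e.g.\ the largest $u$ with $B[u]<S[j+1]$) can leave the complementary piece $(a_1,\ldots,a_{p-1},b_{u+1},\ldots,b_q)$ non-monotone in position. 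You correctly flag this as the obstacle; the standard fix is precisely the alignment-graph/lattice-path crossing argument you name, at which point one is essentially re-deriving Tiskin's result, so citing it (as the paper does) is the economical choice.
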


Our algorithm needs to access the elements of $M$. Since the matrix contains $(n+1)^{2}$ elements, it is too large to be explicitly
stored in memory. Fortunately, Tiskin also showed how to create in $O(n \log^2{n})$ time an $O(n)$-space implicit representation
of $M$ that allows us to obtain any of its elements in $O(\log{n})$ time~\cite{Tiskin}. Before we present the internals of this representation,
we need to introduce some additional definitions illustrated in Figure~\ref{example2}.

\begin{definition}
Let $A$ be any $n \times n$ matrix. Its distribution matrix $A^\Sigma$ is an $(n+1) \times (n+1)$ matrix defined by
$A^\Sigma[x, y] = \sum_{i\geq x, j<y}A[i, j]$, for every $1\leq x\leq n+1, 1\leq y\leq n+1$.
\end{definition}

\begin{definition}
A permutation matrix is a square matrix that has exactly one $1$ in every row and column, and the remaining elements are equal to $0$.
\end{definition}

\begin{figure}[htb]
\centering
\begin{minipage}{0.45\textwidth}
\centering
\begin{tabular}{|c|c|c|}
  \hline 
  0 & 1 & 0\\
  \hline
  1 & 0 & 0\\
  \hline
  0 & 0 & 1\\
  \hline
\end{tabular}
\end{minipage}
\hfill
\begin{minipage}{0.45\textwidth}
\centering
\begin{tabular}{|c|c|c|c|}
  \hline 
  0 & 1 & 2 & 3\\
  \hline
  0 & 1 & 1 & 2\\
  \hline
  0 & 0 & 0 & 1\\
  \hline
  0 & 0 & 0 & 0\\
  \hline
\end{tabular}
\end{minipage}
\caption{A permutation matrix $A$ and its distribution matrix $A^\Sigma$.}
 \label{example2}
\end{figure}

Now, we can provide the final ingredients of the construction. For two strings $w_1$ and $w_2$ of length $d$, Tiskin defines in~\cite{Tiskin} a $ (2d+1) \times (2d+1)$ matrix $L$ in the
following way. Let $w_2'$ be the string equal to $?^dw_2?^d$, whose positions are indexed from $-(d-1)$ to $2d$. The rows
of $L$ are indexed from $-d$ to $d$, while the columns of $L$ are indexed from $0$ to $2d$. The elements of $L$ are defined by
$L[i, j] = \text{LCS}(w_1, w_2'[i+1:j])$ if $j>i$, and $L[i, j] =j-i$ otherwise. In this definition, it is assumed that
$?$ matches any character. If $w_2$ is the input sequence $S$ and $w_1$ is $S'$ then, for $0 \leq i,j \leq n$
we have $L[i, j] = M[i+1, j+1]$. Tiskin proved (Theorem 4.10 in~\cite{Tiskin}) that there exists $2d \times 2d$
permutation matrix $P$ such that $L[i, j]$ = $j-i-P^\Sigma[i, j]$. Furthermore, he provided an $O(n\log^{2}n)$-time
algorithm that finds all the non-zero entries of $P$ (Algorithm 8.2 in~\cite{Tiskin}). Having all the non-zero entries
of $P$ we can apply a dominance counting structure of Chazelle~\cite{Ch} that can be constructed in $O(n\log n)$ time,
uses $O(n)$ space, and calculates $P^{\Sigma}[i,j]$ and hence also $M[i+1,j+1]$ in $O(\log n)$ time. Summarising,
in $O(n\log^{2}n)$ time we obtain a structure that returns any element of $M$ in $O(\log n)$ time. We
similarly obtain a matrix storing the length of LDS of every $S[i+1:j]$.

\subparagraph{Description of the algorithm.}
\label{ss:alg}

Let $S[1:n]$ be the input sequence. For every $1\leq x \leq n$, let~$\res[x]$ be the length of a longest $k$-rollercoaster
in $S[1:x]$, and $\inc[x]$ (respectively, $\dec[x]$) be the length of a longest $k$-rollercoaster in $S[1:x]$ with the last
run increasing (respectively, decreasing). Note that we do not require that these $k$-rollercoasters contain $S[x]$. 
Then, $\res[x] = \max\{ \dec[x], \inc[x]\}$, for $1\leq x \leq n$. Firstly, we~introduce~two~structural~lemmas.

\begin{restatable}{lem}{lemrollercoasterone}
\label{lem:rollercoaster1}
Let $A$ be a $k$-rollercoaster in $S[1:i]$ with the last run decreasing, and $r$ be an increasing subsequence in $S[i:n]$ such
that $\abs{r} \geq k$. Then there exists a $k$-rollercoaster in $S[1:n]$ of length at least $\abs{A} + \abs{r} - 1$ with the last
run increasing.
\end{restatable}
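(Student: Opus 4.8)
The plan is to take the $k$-rollercoaster $A$ in $S[1:i]$ and the increasing subsequence $r$ in $S[i:n]$ and glue them into a single $k$-rollercoaster by overlapping them at the index $i$, which is the only index the two ranges share. First I would note that $A$ ends somewhere in $S[1:i]$; let $S[j]$ be its last element, so $j \le i$. If $j < i$, then $S[i]$ is not used by $A$, and I would simply append $r$ (all of it, starting at $S[i]$) after $A$; this is legal because $S[i]$ occurs after every index used in $A$, and $|r| \ge k$, so the new last run (which is exactly $r$, possibly with one extra element if the last element of $A$ happens to continue increasing into $S[i]$ — I'd need to check the value relation, but even ignoring that, $r$ alone gives a run of length $\ge k$) satisfies the run-length condition, while all earlier runs are untouched and were already of length $\ge k$. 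This yields a $k$-rollercoaster of length $|A| + |r|$ with the last run increasing, which is even better than the claimed bound.

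The remaining case is $j = i$, i.e.\ $S[i]$ is the last element of $A$. Here $r$ also starts at $S[i]$ (or at least its first element is $S[i']$ with $i' \ge i$; if $i' > i$ the first subcase argument applies again, so assume $r$ starts at $S[i]$). Now I would concatenate $A$ with $r$ minus its first element: the resulting index sequence is strictly increasing, it has length $|A| + |r| - 1$, and I must verify the run structure at the junction. The last run of $A$ is decreasing and ends at $S[i]$; the run $r$ is increasing and starts at $S[i]$. So in the glued sequence, $S[i]$ is a local minimum: the decreasing run of $A$ terminates at $S[i]$ and a new increasing run begins there. The new increasing run is exactly $r$, which has length $|r| \ge k$. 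Every other run of the glued sequence is either a run of $A$ (length $\ge k$ by hypothesis) or the run $r$ itself, so all runs have length $\ge k$ and the last run is increasing, as required.

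The one genuine subtlety — and the step I'd flag as the main thing to get right — is the behaviour of the run of $A$ adjacent to $S[i]$. Since $A$'s last run is \emph{decreasing} and $r$ is \emph{increasing}, there is no danger of the two runs merging into one longer monotone run (their slopes have opposite sign), so the junction genuinely creates a run boundary at $S[i]$ and no run of $A$ gets shortened by the gluing. Conversely, if $A$ had ended with an increasing run we could not make this argument directly — which is precisely why the lemma is stated with $A$'s last run decreasing and produces a rollercoaster whose last run is increasing. I would close by remarking that the symmetric statement (swapping increasing and decreasing everywhere) holds by the same argument applied to the reversed order relation, and will be used implicitly later.
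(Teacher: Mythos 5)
Your proof is correct and takes essentially the same route as the paper's: form the union of $A$ and $r$ and verify the run structure at the junction by a short case analysis, noting that a decreasing run followed by an increasing one cannot merge and that the overlapping case costs exactly one shared element. The paper organizes its three cases by the value relation between the last element of $A$ and the first element of $r$ (equal, greater, smaller) rather than by index, but the substance is identical, including the observation that in the non-overlapping cases the length is actually $\abs{A}+\abs{r}$.
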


\begin{proof}
Let $A'$ be the sequence consisting of all elements from both $A$ and $r$. Recall that a sequence is a $k$-rollercoaster if every
run has length at least $k$. In order to show that $A'$ is a $k$-rollercoaster with last run increasing we need to consider three cases: the first element of $r$ is the last element of $A$, the first element of $r$ is
greater than the last element of $A$, and the first element of $r$ is less than the last element of $A$.

In the first case, all runs in $A'$ but the last are the same as in $A$, and the last run is equal to $r$. Since $A$ is a $k$-rollercoaster
and $\abs{r} \geq k$ we conclude that $A'$ is a $k$-rollercoaster. $A$ and $r$ have one common element, so $\abs{A'} =\abs{A} + \abs{r} - 1$.

In the second case, all runs in $A'$ but the last are also the same as in $A$, and the last run consists of the last element of $A$ and $r$. Again we conclude that $A'$ is a $k$-rollercoaster. Since $A$ and $r$ have no common elements, $\abs{A'} =\abs{A} + \abs{r}$.

In the third case, all runs in $A'$ but the last two are the same as in $A$. The second-to-last run in $A'$ consist of the
last run of $A$ and the first element of $r$, and the last run in $A'$ is $r$. Hence, $A'$ is a $k$-rollercoaster. Since $A$ and $r$
have no common elements, $\abs{A'} =\abs{A} + \abs{r}$.
\end{proof}

\begin{restatable}{lem}{lemrollercoastertwo}
\label{lem:rollercoaster2}
Consider a longest $k$-rollercoaster in $S[1:n]$ with the last run increasing (respectively, decreasing), and let $r$
be its last run with the first element $S[i]$. Then $r$ is a longest increasing (respectively, decreasing) subsequence in $S[i:n]$.   
\end{restatable}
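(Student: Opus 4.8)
The plan is to argue by contradiction: suppose that the longest $k$-rollercoaster $A$ in $S[1:n]$ has last run $r$ starting at $S[i]$, but $r$ is not a longest increasing subsequence of $S[i:n]$. Let $r^\star$ be a longest increasing subsequence of $S[i:n]$, so $\abs{r^\star} > \abs{r} \geq k$. The idea is to excise $r$ from $A$ and replace it with $r^\star$, obtaining a strictly longer $k$-rollercoaster and hence a contradiction.

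First I would treat the case where $A$ consists of a single run, i.e.\ $r = A$ and $i = 1$: then $r^\star$ itself is an increasing $k$-rollercoaster in $S[1:n]$ that is strictly longer than $A$, a contradiction. So assume $A$ has at least two runs, and let $A'$ be the $k$-rollercoaster obtained from $A$ by deleting all elements of $r$ except its first element $S[i]$; thus $A'$ ends with a decreasing run (the second-to-last run of $A$) whose last element is $S[i]$, and $A'$ is a $k$-rollercoaster in $S[1:i]$. Now $r^\star$ is an increasing subsequence in $S[i:n]$ with $\abs{r^\star} \geq k$, so Lemma~\ref{lem:rollercoaster1} applies and yields a $k$-rollercoaster in $S[1:n]$ of length at least $\abs{A'} + \abs{r^\star} - 1$ with last run increasing. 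Since $\abs{A'} = \abs{A} - \abs{r} + 1$, this length is at least $\abs{A} - \abs{r} + \abs{r^\star} > \abs{A}$, contradicting the maximality of $A$. One small point to check here is that $S[i]$ is indeed an element of $S[i:n]$ and that $r^\star$ may be chosen to start at $S[i]$ or not — Lemma~\ref{lem:rollercoaster1} already handles all three sub-cases (whether $r^\star$ starts at, above, or below $S[i]$), so no extra work is needed. The decreasing case is symmetric, replacing ``increasing subsequence'' by ``decreasing subsequence'' throughout and invoking the mirror image of Lemma~\ref{lem:rollercoaster1}.

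The main obstacle, such as it is, lies in making sure the surgery actually produces a valid $k$-rollercoaster, but this is precisely what Lemma~\ref{lem:rollercoaster1} was set up to deliver, so the proof reduces to bookkeeping: correctly accounting for the one possibly-shared element $S[i]$ between $A'$ and $r^\star$, and handling the degenerate single-run case separately. I expect the whole argument to be short once Lemma~\ref{lem:rollercoaster1} is in hand.
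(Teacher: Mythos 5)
Your proposal is correct and follows essentially the same route as the paper: truncate $A$ to its prefix $A'$ ending at $S[i]$, note $\abs{A'}=\abs{A}-\abs{r}+1$, and apply Lemma~\ref{lem:rollercoaster1} with the hypothetical longer increasing subsequence to get a $k$-rollercoaster longer than $A$, a contradiction. Your separate treatment of the single-run case is a small extra bit of care the paper glosses over, but it does not change the argument.
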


\begin{proof}
By contradiction. Let $A$ be a longest $k$-rollercoaster from the statement of the lemma, and suppose that there exists a longer
increasing sequence $r'$ in $S[i:n]$. Let $A'$ be the prefix of $A$ ending at $S[i]$. Observe that $\abs{A'} = \abs{A} - \abs{r} + 1$.
Then by Lemma~\ref{lem:rollercoaster1} there exists a $k$-rollercoaster in $S$ of length at least
$\abs{A'}+\abs{r'}-1 = \abs{A} - \abs{r} + \abs{r'} > \abs{A}$.
\end{proof}

The above lemmas allow us to obtain the formula for calculating the arrays $\inc$ and $\dec$.
Recall that $M[i, j]$ is the length of LIS in $S[i+1:j]$. Let $M'$ be the matrix obtained from $M$ by replacing
all elements less than $k$ by $-\infty$, and let $Z(j,j')$ be the set of indices $j \leq i \leq j'$
such that length of LIS in $S[i:j']$ is at least $k$ (or, in other words, $M'[i-1, j'] \neq -\infty$).

\begin{restatable}{proposition}{propinc}
\label{prop:inc}
For every $1\leq x\leq n$, the following holds:
\begin{align*}
\inc'[x] = \max\{ \dec[i] + M'[i-1, x]-1 : i \in Z(1,x) \}, 
\inc[x] = \max\{\inc'[x], M'[0, x]\}.
\end{align*}
If $Z(1, x)$ is empty then we set $\inc'[x] = 0$.
\end{restatable}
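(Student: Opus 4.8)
The plan is to prove the substantive equality $\inc[x]=\max\{\inc'[x],M'[0,x]\}$, reading the first displayed line as the definition of the auxiliary quantity $\inc'[x]$, and to establish the two inequalities separately. Both directions will rest on Lemmas~\ref{lem:rollercoaster1} and~\ref{lem:rollercoaster2}, together with the facts that a longest increasing subsequence of $S[i:x]$ has length $M[i-1,x]$, that such a subsequence is itself a one-run $k$-rollercoaster exactly when $M[i-1,x]\geq k$, and that $M'[i-1,x]=M[i-1,x]$ precisely when $i\in Z(1,x)$.

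For $\inc[x]\geq\max\{\inc'[x],M'[0,x]\}$ I would argue as follows. If $M'[0,x]\neq-\infty$, a longest increasing subsequence of $S[1:x]$ has length $M[0,x]\geq k$ and is a $k$-rollercoaster with last run increasing, so $\inc[x]\geq M'[0,x]$. Then, for a fixed $i\in Z(1,x)$, I take a longest $k$-rollercoaster $A$ in $S[1:i]$ with last run decreasing (length $\dec[i]$) and a longest increasing subsequence $r$ of $S[i:x]$ (length $M'[i-1,x]\geq k$); Lemma~\ref{lem:rollercoaster1} yields a $k$-rollercoaster with last run increasing of length at least $\abs A+\abs r-1=\dec[i]+M'[i-1,x]-1$, and since it uses only elements of $A\subseteq S[1:i]$ and $r\subseteq S[i:x]$ it lies in $S[1:x]$. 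Maximising over $i\in Z(1,x)$ then gives $\inc[x]\geq\inc'[x]$. (The term with $\dec[i]=0$ is dominated: $i\in Z(1,x)$ forces $M[0,x]\geq M[i-1,x]\geq k$, so $\dec[i]+M'[i-1,x]-1<M'[0,x]\leq\inc[x]$.)

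For the reverse inequality I would take a longest $k$-rollercoaster $A$ in $S[1:x]$ with last run increasing (if none exists, $\inc[x]=0$ and $\inc'[x]\geq 0$ settles it), and let $r$, with first element $S[i]$, be its last run. By Lemma~\ref{lem:rollercoaster2}, $r$ is a longest increasing subsequence of $S[i:x]$, so $\abs r=M[i-1,x]\geq k$, hence $i\in Z(1,x)$ and $M'[i-1,x]=\abs r$. If $A$ is a single run, then $\inc[x]=\abs A\leq M[0,x]=M'[0,x]$. If $A$ has at least two runs, the run before $r$ is decreasing (consecutive runs alternate in direction by maximality of runs), so the prefix $A'$ of $A$ ending at $S[i]$ is a $k$-rollercoaster in $S[1:i]$ with last run decreasing; thus $\abs{A'}\leq\dec[i]$, and since $A'$ and $r$ overlap exactly in $S[i]$, $\inc[x]=\abs A=\abs{A'}+\abs r-1\leq\dec[i]+M'[i-1,x]-1\leq\inc'[x]$.

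I do not expect a genuine obstacle; the work is in lining up the two inequalities exactly. The points to be careful about are the single boundary element shared by consecutive runs (the ``$-1$'' in $\abs A=\abs{A'}+\abs r-1$), the remark that consecutive runs of a rollercoaster alternate in direction, and the degenerate cases — $Z(1,x)=\emptyset$, $\dec[i]=0$, a single-run $A$, and $M'[0,x]=-\infty$ — each of which must be mentioned so that neither side of the equality is off by one or by a missing case.
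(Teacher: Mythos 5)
Your proposal is correct and follows essentially the same route as the paper's proof: one direction via Lemma~\ref{lem:rollercoaster1} applied to each $i\in Z(1,x)$ together with the single-run case $M'[0,x]$, and the converse via Lemma~\ref{lem:rollercoaster2} applied to the last run of a longest $k$-rollercoaster with last run increasing. You are merely more explicit about the degenerate cases ($\dec[i]=0$, empty $Z(1,x)$, single-run $A$), which the paper leaves implicit.
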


\begin{proof}
By Lemma~\ref{lem:rollercoaster1} we obtain that for every $i \in Z(1, x)$ there exists a $k$-rollercoaster in $S[1:x]$ with the
last run increasing of length at least $\dec[i] + M'[i-1, x]-1$. We conclude that $\inc'[x]$ is less or equal to the length of a
longest $k$-rollercoaster with the last run increasing in $S[1:x]$. Observe that $M'[0, x]$ corresponds to an increasing run
of length at least $k$ or is equal to $-\infty$. We obtain that $\inc[x]$ is less or equal than the length of a longest $k$-rollercoaster
with the last run increasing in $S[1:x]$.

For the converse, consider a $k$-rollercoaster $A$ with the last run increasing in $S[1:x]$. If $A$ consists of just a single
run then its length is $M'[0,x]$. Otherwise, let $S[i]$ be the first element in the last run of $A$. Then by Lemma~\ref{lem:rollercoaster2} the
length of the last run is equal to $M'[i-1, x]$ and the length of $A$ is $\dec[i] + M'[i-1,x] - 1$. Overall, the length of $A$
is at most $\inc[x]$.
\end{proof}

Proposition~\ref{prop:inc} cannot be applied directly if we aim to achieve the announced $O(n\log^{2}n)$ time complexity,
and we need to introduce some auxiliary definitions.
For every $1\leq d \leq x$ we define $\inc_d[x]$ as follows:
$$
\inc_d'[x] = \max\{\dec[i] + M'[i-1, x]-1 : i \in Z(1, d-1)  \},
\inc_d[x] = \max\{\inc_d'[x], M'[0, x]\} .
$$
If $Z(1,d-1)$ is empty then we set $\inc_d'[x] = 0$.
In other words, $\inc_d[x]$ is equal to the length of a longest $k$-rollercoaster
in $S[1 : x]$ with the last run increasing and starting at an element $S[i]$ with $i<d$ or LIS of $S[1:n]$ of length at least $k$.
Thus, $\inc_{1}[x]$ is equal to either $0$ or
the length of a LIS in $S[1 : x]$. We similarly define $\dec_{d}[x]$.
\begin{observation}
\label{obs:inc}
For every $j > i-k+1$, $\inc_{j}[i] = \inc[i]$.
\end{observation}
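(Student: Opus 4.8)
The plan is to read off both sides from the formula of Proposition~\ref{prop:inc}. Unfolding the definitions, both $\inc[i]$ and $\inc_{j}[i]$ are maxima of the common term $M'[0,i]$ together with terms of the form $\dec[\ell] + M'[\ell-1,i] - 1$; the only difference is the index set over which $\ell$ ranges: all of $Z(1,i)$ for $\inc[i]$, but only those $\ell \in Z(1,i)$ with $\ell < j$ for $\inc_{j}[i]$. In particular $\inc_{j}[i] \le \inc[i]$ is immediate because the latter maximum is over a superset, so it remains only to show that the restriction $\ell < j$ discards nothing, i.e. that every $\ell \in Z(1,i)$ already satisfies $\ell < j$.

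This is the only real step, and it is an elementary length bound. By definition, $\ell \in Z(1,i)$ means that $S[\ell:i]$ contains an increasing subsequence of length at least $k$; since $S[\ell:i]$ consists of exactly $i-\ell+1$ distinct numbers, this forces $i-\ell+1 \ge k$, that is $\ell \le i-k+1$. Combining with the hypothesis $j > i-k+1$ yields $\ell \le i-k+1 < j$, as needed. The same fact can be phrased directly in rollercoaster terms: in any $k$-rollercoaster in $S[1:i]$ whose last run is increasing, Lemma~\ref{lem:rollercoaster2} identifies that last run with a longest increasing subsequence of $S[\ell:i]$, where $S[\ell]$ is its first element; being a run of a $k$-rollercoaster it has at least $k$ elements, hence $\ell \le i-k+1 < j$, so the rollercoaster is already accounted for when computing $\inc_{j}[i]$. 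Together with the trivial inequality this gives $\inc_{j}[i] = \inc[i]$, and the statement for $\dec$ follows by the symmetric argument with increasing and decreasing runs exchanged.

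I do not anticipate any genuine obstacle here: the content is precisely the observation that a monotone run of $k$ distinct values contained in $S[1:i]$ cannot start after position $i-k+1$, combined with Proposition~\ref{prop:inc}. The only point that requires a little care is bookkeeping of index conventions: one must check that the definitions of $Z$, $M'$ and $\inc_{d}$ are aligned so that ``the last run starts at $S[\ell]$ with $\ell \le i-k+1$'' translates exactly into ``$\ell$ lies in the index range of the maximum defining $\inc_{j}'[i]$''.
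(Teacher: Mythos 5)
Your argument is correct and is precisely the justification the paper has in mind (the observation is stated without proof there): an index $\ell$ contributing to $\inc[i]$ must satisfy $M'[\ell-1,i]\neq-\infty$, i.e.\ $S[\ell:i]$ contains an increasing subsequence of $k$ distinct elements, forcing $\ell\le i-k+1<j$, so the restriction to $\ell<j$ in $\inc_j[i]$ is vacuous. Your closing caveat about index bookkeeping is apt --- the paper's literal formula uses $Z(1,d-1)$, whose membership condition refers to LIS in $S[\cdot:d-1]$ rather than $S[\cdot:x]$, but the accompanying ``in other words'' sentence confirms that the intended index set is exactly the one you use, and the $-\infty$ entries of $M'$ absorb any discrepancy in the maximum.
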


We describe a function \textsc{Compute} that receives a contiguous subsequence $S[i:j]$ together with the previously
calculated arrays $\inc_{i}[i:j]$ and $\dec_{i}[i:j]$, and returns the arrays $\inc[i:j]$ and $\dec[i:j]$.
To calculate the length of a longest $k$-rollercoaster in $S[1:n]$ we invoke the function with the whole $S[1:n]$ and the
arrays $\inc_1[1:n]$, $\dec_1[1:n]$ as arguments, and return the maximum over the two resulting arrays. 
Note that $\inc_1[1:n]$ and $\dec_1[1:n]$ can be calculated in $O(n\log n)$ time using Algorithm \ref{alg:lis}. 

Let $m = \ceil{\frac{i+j}{2}}$. The main idea of \textsc{Compute} is to call the function recursively for the left half to calculate $\inc[i:m-1]$
and $\dec[i:m-1]$. The next step is to calculate $\inc_m[m:j]$ and $\dec_m[m:j]$ using tools from the previous paragraphs
(as described below). Finally, we recursively calculate $\inc[m:j]$ and $\dec[m:j]$. Concatenating the results from both
recursive calls gives us the desired result. This is summarised in Algorithm~\ref{alg:log2}.

\begin{algorithm}
\caption{Computing the length of a longest $k$-rollercoaster}
\label{alg:log2}
\begin{algorithmic}[1]
\Procedure{Compute}{$k$, $S[i : j]$, $\inc_i[i : j]$, $\dec_i[i : j]$}
\If {$j-i+2 \leq k$}
\State $\{ \inc[i:j], \dec[i:j] \} \gets \{ \inc_i[i:j], \dec_i[i:j] \}$
\State \textbf{return} $\{\inc[i:j], \dec[i:j]\}$
\EndIf
\State $m \gets \ceil{\frac{i+j}{2}}$
\State $\{\inc[i:m-1], \dec[i:m-1]\} \gets \textsc{Compute}(k, S[i : m-1], \inc_i [i : m-1], \dec_i [i : m-1])$
\State Compute $\inc_m[m:j]$ and $\dec_m[m:j]$
\State $\{\inc[m:j], \dec[m:j]\} \gets \textsc{Compute}(k, S[m : j], \inc_m[m : j], \dec_m[m : j])$
\State \textbf{return} $\{\inc[i:j], \dec[i:j]\}$
\EndProcedure
\end{algorithmic}
\end{algorithm}

\subparagraph{Computing $\inc_m[m:j]$ and $\dec_m[m:j]$.}
We only describe how to calculate $\inc_{m}[m:j]$, as $\dec_{m}[m:j]$ can be computed by a similar approach.
Recall the previously introduced matrix $M'$, obtained by replacing values less than $k$ by $-\infty$ in $M$.
Let $A_{\inc}$ be the $ (m-i) \times (j+1-m)$ matrix with rows indexed from $i$ to $m-1$ and columns indexed from $m$ to $j$ satisfying:
\[ A_{\inc}[x, y] = 
	\begin{cases}
	\dec[x] + M'[x-1, y] - 1 &\text{when } M'[x-1, y] \neq -\infty,\\
	\text{blank} &\text{otherwise.}\\
	\end{cases}
\]
Since we are able to retrieve any element of $M'$ in $O(\log n)$ time using LIS-in-range queries,
and the value of $\dec[x]$, for every $i \leq x \leq m-1$, is already available, each element of $A_{\inc}$
can be calculated in  $O(\log n)$ time. Furthermore, we have the following property.

\begin{proposition}
$A$ is a falling staircase anti-Monge matrix.
\end{proposition}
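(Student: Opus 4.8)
The plan is to verify directly the two defining properties of a falling staircase anti-Monge matrix for $A_{\inc}$: first, that its blanks form a falling staircase, i.e.\ for every blank all entries below and to the left are blank; and second, that the anti-Monge inequality holds for every quadruple of non-blank entries. Both parts reduce to facts already available: the monotonicity of LIS under passing to a contiguous subinterval, and Lemma~\ref{lem:amon}.

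\textbf{The staircase shape.} Recall that $A_{\inc}[x,y]$ is blank exactly when $M'[x-1,y]=-\infty$, that is, when the length of LIS in $S[x:y]$ is strictly less than $k$; note that every matrix position satisfies $x\leq m-1<m\leq y$, so $S[x:y]$ is a genuine non-empty contiguous block. I would then use the elementary observation that LIS can only shrink when we restrict to a contiguous subinterval: if $x'\geq x$ (a row below) and $y'\leq y$ (a column to the left), then $S[x':y']$ is a contiguous subsequence of $S[x:y]$, so the length of its LIS is at most that of $S[x:y]$, hence still $<k$, and $A_{\inc}[x',y']$ is blank as well. This is precisely the required staircase condition, and the direction is the correct one, since the non-blank region consists exactly of the intervals whose LIS has already reached $k$, which only grows as the interval grows.

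\textbf{The anti-Monge inequality.} I would fix rows $x<x'$ and columns $y<y'$ and assume the four entries $A_{\inc}[x,y]$, $A_{\inc}[x,y']$, $A_{\inc}[x',y]$, $A_{\inc}[x',y']$ are all non-blank. By the definition of $A_{\inc}$ this means $M'[x-1,y]$, $M'[x-1,y']$, $M'[x'-1,y]$, $M'[x'-1,y']$ are all different from $-\infty$, so each equals the corresponding entry of $M$. Substituting $A_{\inc}[x,y]=\dec[x]+M'[x-1,y]-1$ and cancelling the purely row-dependent terms $\dec[x]+\dec[x']-2$ (this is Observation~\ref{obs:mon1} in spirit, applied to the row offsets), the inequality $A_{\inc}[x,y]+A_{\inc}[x',y']\geq A_{\inc}[x,y']+A_{\inc}[x',y]$ reduces to
\[ M[x-1,y]+M[x'-1,y'] \;\geq\; M[x-1,y']+M[x'-1,y], \]
which is exactly the anti-Monge inequality for $M$ with rows $x-1<x'-1$ and columns $y<y'$, and hence holds by Lemma~\ref{lem:amon}.

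I do not expect a serious obstacle here. The only points that require care are the bookkeeping around the $-\infty$ entries of $M'$ — in particular, using the hypothesis that all four concerned entries are non-blank to legitimately replace $M'$ by $M$ in all four positions before invoking Lemma~\ref{lem:amon} — and double-checking that "falling staircase" means blanks accumulating downward and to the left, which is exactly the pattern induced by the monotonicity of LIS under taking subintervals.
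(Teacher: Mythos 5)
Your proof is correct and follows essentially the same route as the paper: the staircase shape from monotonicity of LIS under restriction to a subinterval (you phrase it as blanks propagating down-left, the paper as non-blanks propagating up-right, which is the same statement), and the anti-Monge inequality by cancelling the row-dependent offsets $\dec[x]-1$ and reducing to Lemma~\ref{lem:amon}, which is exactly what Observation~\ref{obs:mon1} packages. Your explicit attention to replacing $M'$ by $M$ only on the four non-blank entries is a nice bit of care that the paper leaves implicit.
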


\begin{proof}
By Lemma \ref{lem:amon} $M$ is an anti-Monge matrix. By Observation~\ref{obs:mon1} this is still the case if we add the same
value to all elements in the same row.

To prove that $A$ is a falling staircase matrix consider a non-blank element
$A[i,j]$. Then $M[i,j] \geq k$. But this implies $M[i-1,j] \geq k$ and $M[i,j+1] \geq k$ (as long as $i > 1$ and $j < n$), so all elements above and to the right are
also non-blank as required.
\end{proof}

\begin{proposition}
\label{prop:log2}
For every $m \leq \ell \leq j$, $\inc_m[\ell]$ is equal to either $\inc_i[\ell]$ or the maximum in the $\ell^{\text{th}}$ column of $A$.
\end{proposition}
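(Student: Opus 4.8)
The plan is to unfold the definitions of $\inc_m[\ell]$ and $\inc_i[\ell]$ and show that they agree up to one extra family of candidate values, which turns out to be exactly the set of non-blank entries of the $\ell$-th column of $A$; from this the stated dichotomy is immediate. Recall that, by definition, $\inc_d[\ell]=\max\{\inc_d'[\ell],\,M'[0,\ell]\}$, where $\inc_d'[\ell]$ is the maximum of $\dec[x]+M'[x-1,\ell]-1$ over all indices $x$ with $1\le x\le d-1$ and $M'[x-1,\ell]\neq-\infty$ (with $\inc_d'[\ell]$ defaulting to $-\infty$ when no such $x$ exists). First I would compare the two index sets: since $i\le m$, every $x$ admissible in the definition of $\inc_i'[\ell]$ is also admissible for $\inc_m'[\ell]$, and the admissible indices that are present \emph{only} for $\inc_m'[\ell]$ are precisely those $x$ with $i\le x\le m-1$ and $M'[x-1,\ell]\neq-\infty$.

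Next I would identify these extra indices with the rows of $A$. The matrix $A$ has rows indexed from $i$ to $m-1$ and columns indexed from $m$ to $j$, so since $m\le\ell\le j$ its $\ell$-th column is well defined, and for a row index $x\in\{i,\dots,m-1\}$ we have $A[x,\ell]=\dec[x]+M'[x-1,\ell]-1$ exactly when $M'[x-1,\ell]\neq-\infty$, while $A[x,\ell]$ is blank otherwise. Hence, as $x$ ranges over $\{i,\dots,m-1\}$, the non-blank entries $A[x,\ell]$ are exactly the extra candidate values for $\inc_m'[\ell]$ identified above. Writing $c_\ell$ for the maximum in the $\ell$-th column of $A$ (and $c_\ell=-\infty$ if that column is entirely blank), we obtain $\inc_m'[\ell]=\max\{\inc_i'[\ell],\,c_\ell\}$. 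Taking the maximum with $M'[0,\ell]$ on both sides and using $\inc_i[\ell]=\max\{\inc_i'[\ell],\,M'[0,\ell]\}\ge M'[0,\ell]$, this gives
\[
\inc_m[\ell]=\max\{\inc_m'[\ell],\,M'[0,\ell]\}=\max\{\inc_i'[\ell],\,M'[0,\ell],\,c_\ell\}=\max\{\inc_i[\ell],\,c_\ell\},
\]
and since the maximum of two quantities equals one of them, $\inc_m[\ell]$ equals either $\inc_i[\ell]$ or $c_\ell$, i.e., the maximum in the $\ell$-th column of $A$.

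The proof is essentially a matter of unwinding definitions, so the main thing to get right is the index bookkeeping rather than any deep argument. Concretely, the point to be careful about is that the restriction $x\le d-1$ in the definition of $\inc_d'$ splits the admissible indices cleanly into those with $x\le i-1$ — exactly the ones accounted for by $\inc_i'[\ell]$ — and those with $i\le x\le m-1$ — exactly the rows of $A$ — and that for $x$ in the latter range, admissibility ($M'[x-1,\ell]\neq-\infty$) coincides with the corresponding entry of column $\ell$ of $A$ being non-blank, which is immediate from the definition of $A$ together with $m\le\ell$. The degenerate cases (an all-blank $\ell$-th column of $A$, or an empty set of admissible indices so that $\inc_d'[\ell]$ defaults) are harmless, since the term $M'[0,\ell]$ occurs in both $\inc_m[\ell]$ and $\inc_i[\ell]$ and dominates these defaults.
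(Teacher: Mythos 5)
Your proof is correct and follows essentially the same route as the paper's (which is just a two-line version of the same index-set split: the candidates for $\inc_m'[\ell]$ with index below $i$ give $\inc_i'[\ell]$, and those with index in $\{i,\dots,m-1\}$ are exactly the non-blank entries of column $\ell$ of $A$). The only nitpick is that the paper's convention sets the empty-max default of $\inc_d'$ to $0$ rather than $-\infty$, but since every non-blank entry of $A$ is at least $k-1>0$, this does not affect your identity $\inc_m[\ell]=\max\{\inc_i[\ell],c_\ell\}$.
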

\begin{proof}
For every $m\leq \ell \leq j$, $\inc_{m}[\ell]$ is equal to either $\inc_{i}[\ell]$ or
$\max\{  \dec[j] + M'[j-1, \ell]-1 : j \in Z(i,m-1) \}$. However, the latter is exactly the maximum in the $\ell^{\text{th}}$ column of $A$.
\end{proof}

\begin{lemma}
\label{lem:zlozonosc}
We can compute $\inc_m[m:j]$ and $\dec_m[m:j]$ in $O((j-i+1)\log{n})$ time.
\end{lemma}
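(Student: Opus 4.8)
The plan is to bound the work done in a single call to \textsc{Compute} on $S[i:j]$, excluding the two recursive subcalls, and then sum over the recursion tree. Write $N = j-i+1$ for the length of the current range. The non-recursive work consists of: (i) computing $\inc_m[m:j]$ and $\dec_m[m:j]$, and (ii) the $O(N)$ bookkeeping for splitting and concatenating arrays. For (i), by Proposition~\ref{prop:log2} each entry $\inc_m[\ell]$ is either the already-available value $\inc_i[\ell]$ or the maximum of the $\ell^{\text{th}}$ column of the falling staircase anti-Monge matrix $A_{\inc}$, which has $O(N)$ rows and $O(N)$ columns. The first step is therefore to invoke Corollary~\ref{cor:am} to find all column maxima of $A_{\inc}$; the SMAWK-type algorithm performs $O(N)$ oracle queries, and each query costs $O(\log n)$ because each entry of $A_{\inc}$ is $\dec[x] + M'[x-1,y]-1$ and a single LIS-in-range query takes $O(\log n)$ time (the $\dec[x]$ values are already computed). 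Taking an entrywise maximum with $\inc_i[m:j]$ costs another $O(N)$. The same argument, using the LDS matrix and $\inc[x]$ values, handles $\dec_m[m:j]$. Hence the non-recursive cost of a call on a range of length $N$ is $O(N\log n)$, which is exactly the bound claimed in Lemma~\ref{lem:zlozonosc} for the top-level structure, and this is the content of the lemma.

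The main subtlety — and the step I expect to be the real obstacle — is verifying that Corollary~\ref{cor:am} actually applies to $A_{\inc}$, i.e.\ that $A_{\inc}$ is genuinely a \emph{falling staircase} anti-Monge matrix in the precise sense required by Lemma~\ref{lem:ak}/Corollary~\ref{cor:am}, rather than merely "anti-Monge with some blanks". This is precisely what the preceding proposition (the one asserting "$A$ is a falling staircase anti-Monge matrix") establishes: the anti-Monge inequality on non-blank $2\times 2$ submatrices follows from Lemma~\ref{lem:amon} via Observation~\ref{obs:mon1} (adding $\dec[x]$ to row $x$ and $-1$ everywhere preserves anti-Monge), and the staircase shape follows because $M'[x-1,y]\neq -\infty$ — meaning $\mathrm{LIS}$ of $S[x:y]\geq k$ — forces $M'[x-2,y]\neq-\infty$ and $M'[x-1,y+1]\neq-\infty$, so non-blank entries are closed upward and to the right. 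So I would simply cite that proposition. One should also be slightly careful that the blanks could make an entire column blank (when no $i\in Z(i,m-1)$ works), in which case $\inc_m[\ell] = \inc_i[\ell]$ by Proposition~\ref{prop:log2}; the column-maxima routine on a staircase matrix handles empty columns gracefully, or one can treat a blank column maximum as $-\infty$.

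Finally, to get the overall running time of the algorithm (though the lemma itself only asserts the per-call bound), one sums $O(N\log n)$ over all nodes of the recursion tree. Since \textsc{Compute} splits a range of length $N$ into two ranges of length roughly $N/2$ and recurses, and the recursion bottoms out when $j-i+2\leq k$, the ranges at each level of the recursion are disjoint and total $O(n)$, and there are $O(\log n)$ levels; hence the total is $O(n\log^2 n)$, matching the theorem title. For the lemma as stated I would keep the argument self-contained: the cost of computing $\inc_m[m:j]$ and $\dec_m[m:j]$ is $O((j-i+1))$ applications of the column-maxima algorithm's oracle plus $O(j-i+1)$ array operations, and since each oracle call is an $O(\log n)$-time LIS-in-range (or LDS-in-range) query, the total is $O((j-i+1)\log n)$.
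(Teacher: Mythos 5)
Your argument is exactly the paper's: apply Proposition~\ref{prop:log2} to reduce the task to finding all column maxima of the falling staircase anti-Monge matrix $A_{\inc}$ (and its LDS analogue), invoke Corollary~\ref{cor:am} with $O(\log n)$-time oracle access to each entry via LIS-in-range queries, and conclude the $O((j-i+1)\log n)$ bound. The additional recursion-tree accounting you include belongs to the proof of the main theorem rather than this lemma, but it is consistent with the paper.
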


\begin{proof}
By Proposition~\ref{prop:log2} computing $\inc_m[m:j]$ reduces to finding all the column maxima in $A$.
Since $A$ is a falling staircase anti-Monge matrix, we can use the algorithm from Corollary~\ref{cor:am}.
Access to any element of $A$ requires $O(\log n)$ time, so in total we obtain $O((j-i+1)\log n)$ time complexity.
\end{proof}

We can now state with the main result of this section.
\begin{theorem}
For every sequence $S[1:n]$ and $k\geq 3$, the length of a longest $k$-rollercoaster in $S$ can be found in $O(n \log^2{n})$ time.
\end{theorem}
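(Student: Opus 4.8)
The plan is to establish correctness and the running time of Algorithm~\ref{alg:log2} (\textsc{Compute}); the theorem then follows at once, since we invoke \textsc{Compute} on the whole $S[1:n]$ together with the arrays $\inc_1[1:n]$ and $\dec_1[1:n]$, and return $\max\{\inc[x],\dec[x] : 1\le x\le n\}$, which equals $\max_x \res[x]$. As $\res[x]$ is nondecreasing in $x$ (a longest $k$-rollercoaster in $S[1:x]$ is still one in $S[1:x+1]$), this maximum is $\res[n]$, the length of a longest $k$-rollercoaster in $S$.

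First I would deal with the preprocessing, done once before any call to \textsc{Compute}. Using Tiskin's construction together with Chazelle's dominance-counting structure, as recalled before the description of the algorithm, we build in $O(n\log^2 n)$ time and $O(n)$ space an implicit representation of $M$ and of the analogous LDS matrix that answers any LIS-in-range or LDS-in-range query in $O(\log n)$ time; in particular any entry of $M'$ is then accessible in $O(\log n)$ time. The starting arrays $\inc_1[1:n]$ and $\dec_1[1:n]$ are computed in $O(n\log n)$ time by Algorithm~\ref{alg:lis}, since $\inc_1[x]$ is either $0$ or the length of a LIS of $S[1:x]$.

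Next I would prove, by induction on $j-i$, the invariant: if \textsc{Compute} is called on $S[i:j]$ with correct input arrays $\inc_i[i:j]$ and $\dec_i[i:j]$, then it returns the correct arrays $\inc[i:j]$ and $\dec[i:j]$. For the base case $j-i+2\le k$, every $x\in[i:j]$ satisfies $x\le i+k-2$, so Observation~\ref{obs:inc} gives $\inc_i[x]=\inc[x]$ and $\dec_i[x]=\dec[x]$, and returning the input is correct. For the inductive step with $m=\ceil{\frac{i+j}{2}}$: the first recursive call receives $\inc_i[i:m-1],\dec_i[i:m-1]$, which are precisely the input arrays demanded by its own invariant (indexed by the left endpoint $i$ of the sub-range), so by induction it returns $\inc[i:m-1],\dec[i:m-1]$, values well-defined from $S[1:m-1]$ alone. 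Then $\inc_m[m:j]$ (and symmetrically $\dec_m[m:j]$) is computed correctly: by Proposition~\ref{prop:log2} each $\inc_m[\ell]$ equals either $\inc_i[\ell]$, available from the input, or the $\ell^{\text{th}}$ column maximum of the falling-staircase anti-Monge matrix $A_{\inc}$, whose entries depend only on the just-computed $\dec[i:m-1]$ and on $M'$ via range queries; all these maxima are found through Corollary~\ref{cor:am} in $O((j-i+1)\log n)$ time (Lemma~\ref{lem:zlozonosc}). Finally the second recursive call receives $\inc_m[m:j],\dec_m[m:j]$ — again exactly the input arrays its invariant requires — and returns $\inc[m:j],\dec[m:j]$; concatenating the two halves yields $\inc[i:j],\dec[i:j]$. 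Correctness of the underlying formulas is already encapsulated in Lemmas~\ref{lem:rollercoaster1} and~\ref{lem:rollercoaster2} and Propositions~\ref{prop:inc} and~\ref{prop:log2}.

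Finally, for the running time: apart from the $O(n\log^2 n)$ preprocessing, \textsc{Compute} on a range of length $N$ makes two recursive calls on halves, spends $O(N\log n)$ time producing $\inc_m,\dec_m$ in the middle step, and $O(N)$ time concatenating, giving $T(N)=2T(N/2)+O(N\log n)$ and hence $T(n)=O(n\log^2 n)$; adding the preprocessing keeps the total at $O(n\log^2 n)$. I expect the main obstacle to be exactly the bookkeeping in the inductive step: checking that every recursive call is handed precisely the $\inc_{(\cdot)},\dec_{(\cdot)}$ arrays its invariant expects, and that the middle step legitimately uses $\dec[i:m-1]$ (already available) while depending on nothing from $S[m:n]$ — the structural lemmas already carry the combinatorial weight.
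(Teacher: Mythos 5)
Your proposal is correct and follows essentially the same route as the paper: $O(n\log^2 n)$ preprocessing for the LIS/LDS-in-range structures, $O(n\log n)$ for $\inc_1,\dec_1$, and the recurrence $T(N)=2T(N/2)+O(N\log n)$ from Lemma~\ref{lem:zlozonosc}. The only difference is that you spell out the correctness invariant of \textsc{Compute} by induction (including the base case via Observation~\ref{obs:inc}), which the paper leaves implicit in the surrounding discussion; that bookkeeping is accurate.
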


\begin{proof}
The algorithm needs $O(n \log^2{n})$ preprocessing time to construct the LIS-in-range (and LDS-in-range) structure.
We compute $\inc_1[1:n]$ and $\dec_1[1:n]$ in $O(n \log n)$ time using Algorithm \ref{alg:lis}.
Then, we call the recursive function $\textsc{Compute}$. 
By Lemma~\ref{lem:zlozonosc} a call of the function on $S[i:j]$ takes $O((j-i+1)\log n)$ time,
so its running time is described by the recurrence $T(n) = 2T(n/2) + O(n\log n)$ that solves to $O(n\log^{2}n)$.
Thus, the overall time complexity is $O(n \log^2{n})$.
\end{proof}

\section{Constructing a Longest $k$-Rollercoaster}
\label{sec:recover}
In this section we briefly discuss how to construct a longest rollercoaster for both algorithms.

\subparagraph{For the $O(nk^2)$ algorithm.} In the respective algorithm, for each $2\leq i\leq k$, and for each element $S[j]$, we compute the predecessor of $S[j]$ on a longest (not necessarily contiguous) subsequence of $S$ ending with $S[j]$ and with every run of length at least $k$, except for the last run, which has only $i$ elements if $i < k$ and at least $k$ elements if $i = k$. If, together with this predecessor, we store also the length of the last run in the respective subsequence of $S$, we can trace the whole sequence back. Indeed, the predecessor gives us the information what element should we list before $S[j]$ in the subsequence. The length of the run gives us information on the length of the run ending with the predecessor of $S[j]$, so we know where we should look in our data structures for the predecessor of $S[j]$. For some $i$ and $j$, tracing back a longest (not necessarily contiguous) subsequence of $S$ ending with $S[j]$ and with every run of length at least $k$, except for the last run, which has only $i$ elements if $i < k$ and at least $k$ elements if $i = k$, takes, clearly, $O(n)$ time, provided that we have the information described above.

In the end, we will only need to trace back a longest (not necessarily contiguous) subsequence of $S$ ending with some element $S[j]$ and with every run of length at least $k$. Given that we also compute the length of a longest (not necessarily contiguous) subsequence of $S$ ending with each $S[j]$ and with every run of length at least $k$, we can select in $O(n)$ time the ending element of the subsequence we need to trace back.

In conclusion, once the $O(nk^2)$ time algorithm for computing the length of a longest $k$-rollercoaster is executed, we can actually compute the respective $k$-rollercoaster in $O(n)$ additional time. 

\subparagraph{For the $O(n\log^2 n)$ algorithm.}
In order to retrieve the elements of a longest $k$-rollercoaster we need to extend our algorithm to maintain global arrays
$Pred_{\inc}[1, \ldots, n]$ and $Pred_{\dec}[1, \ldots, n]$. Elements of these arrays are computed during the calculations of
$\inc_m[m:j]$ and $\dec_m[m:j]$ as follows. Initially they are equal to $-1$. After execution of the algorithm we demand that
$Pred_{\inc}$ satisfies the following: for every $1\leq i \leq n$ we have that $\inc[i] = \dec[ Pred_{\inc}[i] ] + M'[Pred_{\inc}[i]-1,i] -1$ if $Pred_{\inc}[i] \neq -1$ and $\inc[i] = \max \{ 0, M'[0, i] \}$ otherwise, and similarly for $Pred_{\dec}$. It is straightforward to augment the algorithm from Corollary~\ref{cor:am}
to obtain such information. 
 
We retrieve the elements of a longest $k$-rollercoaster from the last one to the first one. Recall that a longest
$k$-rollercoaster has the length equal to $\max\{\inc[n], \dec[n]\}$. We focus on how to obtain a longest $k$-rollercoaster
$R$ of length $\inc[n]$ with last run increasing (so, assume, w.l.o.g., that $\inc[n]>\dec[n]$); the procedure is similar for
$\dec[n]$ and the last run decreasing. 

Observe that if $\inc[n]$ is equal to the length of  LIS in the input sequence, we can obtain the elements of $R$ by Algorithm \ref{alg:lis} in $O(n \log{n})$ time. Otherwise, there exists $i<n$ such that $\inc[n] = \dec[i] + M[i-1, n] - 1$. The value $i$ is stored in $Pred_{\inc}[n]$. In this case, we construct $R$ by finding recursively a longest $k$-rollercoaster associated with $\dec[i]$ and concatenating it with LIS in $S[i:n]$. This holds because, by Lemma~\ref{lem:rollercoaster2} the last run of $R$ is a LIS in $S[i:n]$. Obtaining LIS in $S[i:n]$ can be done in $O((n-i) \log n)$ time. 

Thus, in general, we will need to compute a series of LISs and LDSs on the ranges $S[n_{i-1}:n_{i}]$, for $1\leq i\leq m$, where $m$ is the number of runs in a longest $k$-rollercoaster, $n_m=n$ and $n_0=1$. Moreover, $n_{i-1}=Pred_{\inc}[n_i]$, if the $i^\text{th}$ run of the rollercoaster is increasing and $n_{i-1}=Pred_{\dec}[n_i]$, if the $i^\text{th}$ run of the rollercoaster is decreasing. Obtaining the LIS in $S[n_{i-1}:n_{i}]$ can be done in $O((n_i-n_{i-1}+1) \log{n})$ time. 

Adding up the time needed to compute LIS or LDS for each of these ranges we get $O(n\log{n})$ total time needed to obtain elements of a longest $k$-rollercoaster.

\section{Lower Bound}

In the final section of our paper, we prove that any comparison-based algorithm computing the length of a longest $k$-rollercoaster in
a permutation $S$ of $\{1,\ldots,n\}$, for $4 \leq k \leq \frac{n}{3}$, performs at least $\Omega(n \log{k})$ comparisons.
Let $T$ be a binary comparison tree associated with an algorithm that computes the result. The number of comparisons made in the algorithm is equal
to the height of $T$, and this is a lower bound on the execution time of the algorithm. 

Let $A$ be a partial ordering associated with a path from the root to some leaf of $T$. Since the algorithm cannot distinguish between
permutations following the same path, every permutation consistent with $A$ has to give the same result. Our approach is to first
identify a set $U$ of permutations of $\{1,\ldots,n\}$ such that $\log{\abs{U}} = \Theta(n \log {k})$, and any ordering associated
with a leaf of $T$ can be consistent with at most one permutation from $U$. Hence, the number of leaves in $T$ is at least $\abs{U}$.
Since the height of a binary tree is at least logarithm of the number of leaves,
this will show that the height of $T$, and hence also the number of comparison performed by the algorithm, is at least $\Omega(\log{\abs{U}}) = \Omega(n \log{k})$.

We first recall the set $\Gamma$ of $\ell^{n-2\ell}$ permutations of $\{1,\ldots,n\}$ proposed by
Fredman in~\cite{Fredman75}, where $\ell$ is a parameter. These permutations are essentially different inputs $S$ for an
algorithm computing the length of LIS, each leading to a different leaf in the comparison tree.

So, essentially, we want to construct input sequences $(x_1,\ldots,x_n)$, with their elements $x_1,\ldots,x_n$ chosen so that
certain linear orderings of the $x_i$s are induced. To create a permutation from $\Gamma$ we partition $(x_1,\ldots,x_n)$ into
$\ell$ subsequences $P_1, P_2, \ldots, P_{\ell}$. 
To simplify the exposure, let $\ell_{\text{prefix}}$ of a sequence be its prefix of length $\ell$, while the
$\ell_{\text{suffix}}$ is its suffix of length $\ell$; the remaining $n - 2\ell$ elements are called $\ell_{\text{middle}}$ of the sequence. We partition $(x_1,\ldots,x_n)$ in the following way: the $i^\text{th}$ element of $\ell_{\text{prefix}}$
(that is, $x_i$) and the $i^\text{th}$ element of $\ell_{\text{suffix}}$ ($x_{n-\ell+i}$) belong to $P_i$. Each element from $\ell_{\text{middle}}$ of the
sequence belongs to an arbitrary chosen part $P_j$. This gives us $\ell^{n-2\ell}$ different partitions. For a partition
$P_1,\ldots,P_\ell$, we assign values from $\{1,\ldots,n\}$ to the input sequence in such a way, that the elements of each part
$P_i$ form a decreasing sequence and, for $1\ \leq i \leq \ell-1$, each element of $P_i$ is less than any element of $P_{i+1}$
(see Figure~\ref{fig:perm}). So, each such possible assignment gives us a permutation from $\Gamma$. LIS of any permutation from
$\Gamma$ is of length $\ell$ because it contains one element from each $P_i$. LDS of any permutation of $\Gamma$ is no
longer than $n-2\ell+2$ because it contains at most one element from $\ell_{\text{prefix}}$ and at most one from $\ell_{\text{suffix}}$.

\begin{proposition}
\label{pro:split}
Each permutation from $\Gamma$ can be split into $\ell$ descending subsequences in only one way. For two different permutations
from $\Gamma$ these ways of splitting are different. 
\end{proposition}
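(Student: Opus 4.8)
The plan is to prove both statements at once by showing that a permutation $\pi\in\Gamma$ admits \emph{exactly one} decomposition into $\ell$ descending subsequences, and that this decomposition is precisely the partition $P_1,\dots,P_\ell$ used to build $\pi$. The second assertion of the proposition then follows immediately: $\pi$ is completely determined by its defining partition, so two distinct permutations of $\Gamma$ come from distinct partitions, hence have distinct (unique) descending decompositions.

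The main task is therefore uniqueness. I would fix an arbitrary decomposition $\{Q_1,\dots,Q_\ell\}$ of $\pi$ into $\ell$ descending subsequences (one exists, namely $\{P_1,\dots,P_\ell\}$) and argue that, after relabeling, $Q_i=P_i$. First, the $\ell_{\text{prefix}}$ elements occupy positions $1,\dots,\ell$ with strictly increasing values, so they form an increasing subsequence of length $\ell$; since a descending subsequence and an increasing subsequence meet in at most one element, and there are exactly $\ell$ parts, each $Q_i$ contains exactly one prefix element, and I relabel so that $Q_i$ contains the prefix element at position $i$ (which belongs to $P_i$ and has value $\max P_i$). The same argument shows each $Q_i$ contains exactly one $\ell_{\text{suffix}}$ element, say the one lying in $P_{\rho(i)}$ (value $\min P_{\rho(i)}$), for some permutation $\rho$ of $\{1,\dots,\ell\}$.

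The crux is to pin down $\rho$ and then to place the $\ell_{\text{middle}}$ elements. Since $Q_i$ is descending and its prefix element occurs earlier (position $\le\ell$) than its suffix element (position $\ge n-\ell+1$), we must have $\max P_i>\min P_{\rho(i)}$; using that the value ranges of $P_1,\dots,P_\ell$ are stacked and that $\abs{P_t}\ge 2$ for every $t$, this inequality holds precisely when $\rho(i)\le i$, and the only permutation with $\rho(i)\le i$ for all $i$ is the identity. Hence $Q_i$ contains both the prefix and the suffix element of $P_i$. Now take a middle element $m\in P_i$: its position lies strictly after all prefix positions and strictly before all suffix positions, so if $m\in Q_a$ then comparing $m$ (later position) with the prefix element of $Q_a$ (value $\max P_a$) forces $\max P_a>\mathrm{val}(m)\ge\min P_i$, hence $a\ge i$; and comparing $m$ (earlier position) with the suffix element of $Q_a$ (value $\min P_a$) forces $\min P_a<\mathrm{val}(m)\le\max P_i$, hence $a\le i$. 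Thus $a=i$, so $P_i\subseteq Q_i$ for every $i$; since both families partition the same ground set, $Q_i=P_i$, proving uniqueness.

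Finally, for the second claim I would note that the assignment taking an admissible partition $P_1,\dots,P_\ell$ to its permutation in $\Gamma$ is injective, because the uniqueness just proved lets us read off the $P_i$ from the permutation as its only descending decomposition; hence distinct permutations of $\Gamma$ arise from distinct partitions and their splittings differ. I expect the only genuinely fiddly point to be the middle-element case analysis, where one must combine the position interleaving (prefix $<$ middle $<$ suffix) with the monotonicity of $\max P_t$ and $\min P_t$ in $t$; by contrast the step ``$\rho(i)\le i$ for all $i$ forces $\rho=\mathrm{id}$'' is the conceptual heart but is very short once the setup is in place.
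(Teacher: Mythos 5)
Your proof is correct and follows essentially the same route as the paper's: use the fact that the $\ell_{\text{prefix}}$ and $\ell_{\text{suffix}}$ elements each form increasing subsequences to force one of each into every descending part, match prefix to suffix elements via the stacked value ranges (your $\rho(i)\le i \Rightarrow \rho=\mathrm{id}$ packaging is a slightly slicker version of the paper's inductive argument starting from $D_1$), and then sandwich the middle elements by position and value to conclude $Q_i=P_i$. Your treatment of the middle elements and of the second claim is in fact more explicit than the paper's rather terse ``repeating this reasoning,'' but no new idea is involved.
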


\begin{proof}
Let $P$ be a permutation from $\Gamma$ and $P_1, \ldots, P_{\ell}$ be its corresponding partition as described above.
Observe that elements of $\ell_{prefix}$ (respectively, $\ell_{suffix}$)
of $P$ form an increasing subsequence, so no two of them can be in the same decreasing subsequence.  
Now let $D_1,\ldots,D_{\ell}$ be a partition of $P$ into $\ell$ decreasing subsequences, such that $D_i$
contains the $i^\text{th}$ element from $\ell_{prefix}$.
Since elements of $\ell_{suffix}$ form an increasing subsequence, each $D_i$ has to
contain exactly one of them.
Because only the first element in $\ell_{suffix}$ is smaller than the first element in the
$\ell_{prefix}$, $D_{1}$ actually has to contain the first element of $\ell_{suffix}$.
Repeating this reasoning, we obtain that $D_{i}$ contains the $i^\text{th}$ element
from $\ell_{prefix}$ and also the $i^\text{th}$ element from $\ell_{suffix}$.
Then, we obtain that $D_{1}$ is actually equal to $P_{1}$, and by repeating
this reasoning, that $D_{i}=P_{i}$ for all $i=1,\ldots,\ell$.
\end{proof}

\begin{figure}[htb]
\centering
\includegraphics[width=0.7\textwidth]{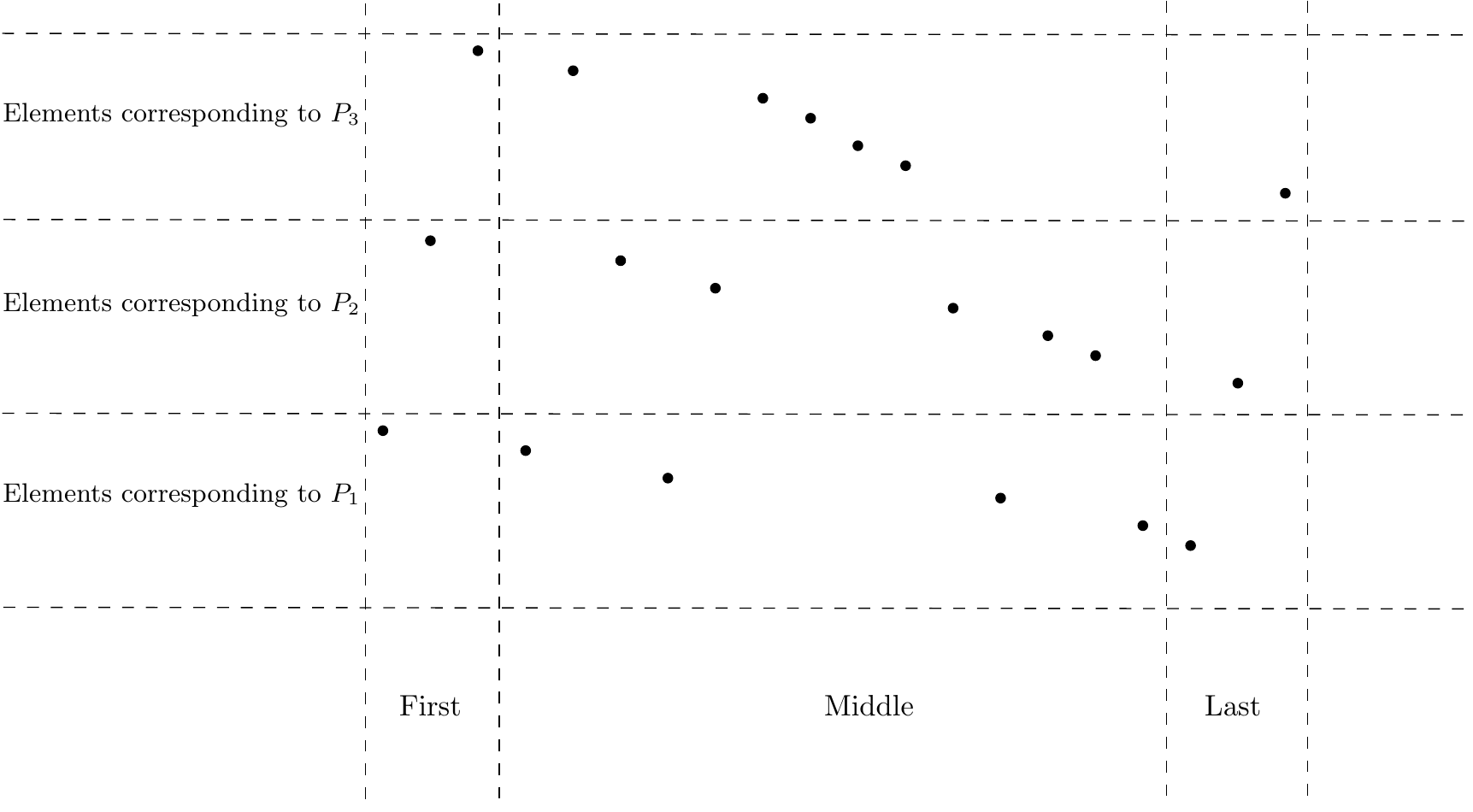}
\caption{Example permutation $P \in \Gamma$ for $\ell = 3$ in a~plane. In this figure, we have $P = (6,13,20,5,19,12,4,11,18,17,16,15,10,3,9,8,2,1,7,14)$.}
\label{fig:perm}
\end{figure}

We now consider the algorithm computing the length of a longest $k$-rollercoaster. Using the permutations from $\Gamma$ we create a set $U$ of $k^{n \frac{k-3}{3k-3}}$ permutations of $\{1,\ldots,n\}$, again with the same principle behind: they should be input sequences which lead to different paths in the comparison tree associated to an algorithm computing the length of a longest $k$-rollercoaster. Observe that $\log{(k^{n \frac{k-3}{3k-3}})} = \Theta(n \log{k})$, so this would imply the desired lower bound of $\Theta(n \log{k})$ on the number of comparisons done by an algorithm to compute the length of a longest $k$-rollercoaster.

A permutation from $U$ is obtained as follows. Suppose that $(3k-3)$ divides $n$. Split the sequence $(x_1,\ldots,x_n)$ into $\frac{n}{3k-3}$
{\em blocks} (contiguous subsequences) of size $3k-3$. We will assign to the elements of the $i^\text{th}$ contiguous block
$(x_{i(3k-3)+1},\ldots,x_{(i+1)(3k-3)})$ distinct values from the set  $\{{i(3k-3)+1},\ldots,{(i+1)(3k-3)}\}$, as follows. In every block,
use one of the permutations from $\Gamma$ (with the parameter $\ell$ set to $k$) to values to the elements $x_{i(3k-3)+1},\ldots,x_{(i+1)(3k-3)}$
of that block, and then assign values to those elements according to
that ordering. In this way, we can create $|\Gamma|^{ \frac{n}{3k-3} } = (k^{k-3})^{\frac{n}{3k-3}}$ permutations of $\{1,\ldots,n\}$.
Observe that in every block the length of a longest decreasing subsequence is less than $k$. Since every block consists of strictly greater
values than the previous ones, a longest decreasing subsequence of every permutation from $U$ is less than $k$. A longest increasing
subsequence of every element of $\Gamma$ is equal to $k$, so a longest $k$-rollercoaster for every element of $U$ is equal to
$\frac{kn}{3k-3}$ and consists only of longest increasing subsequences corresponding to all the blocks glued one after the other. We can now show a result similar to
Proposition~\ref{pro:split}.

\begin{proposition}
\label{pro:partition}
Each permutation from $U$ can be split into $\frac{kn}{3k-3}$ descending subsequence in only one way. For two different permutations from $U$ these ways of splitting are different. 
\end{proposition}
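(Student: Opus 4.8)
The plan is to reduce the statement to Proposition~\ref{pro:split} applied separately to each block. First I would observe that, because the values assigned to the $i^\text{th}$ block all lie below the values assigned to the $(i+1)^\text{th}$ block while the blocks are contiguous in position, no decreasing subsequence of a permutation $P \in U$ can contain elements from two different blocks: this would require a later position to carry a smaller value, which is impossible across a block boundary. Hence any partition of $P$ into decreasing subsequences induces, by restriction to each block, a partition of that block into decreasing subsequences, and the total number of parts equals the sum, over the $\frac{n}{3k-3}$ blocks, of the number of parts used inside each block.

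Next I would pin down this distribution. Each block is, up to a shift of values that does not affect the relative order, a permutation from $\Gamma$ with parameter $\ell = k$, so its longest increasing subsequence has length exactly $k$. By the dual of Dilworth's theorem (Mirsky's theorem), a sequence whose longest increasing subsequence has length $k$ cannot be partitioned into fewer than $k$ decreasing subsequences. Therefore each of the $\frac{n}{3k-3}$ blocks contributes at least $k$ parts, so any partition of $P$ into decreasing subsequences has at least $\frac{kn}{3k-3}$ parts, and equality forces every block to be partitioned into exactly $k$ decreasing subsequences.

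Then I would invoke Proposition~\ref{pro:split}: a permutation from $\Gamma$ with $\ell = k$ admits a unique partition into $k$ decreasing subsequences. Consequently the partition of $P$ into $\frac{kn}{3k-3}$ decreasing subsequences is the concatenation of the unique per-block partitions, and is therefore itself unique. For the second claim, I would recall that a permutation of $U$ is completely determined by the choice of a $\Gamma$-permutation in each block; two distinct permutations of $U$ thus differ, in some block, in the $\Gamma$-permutation used, and by the second part of Proposition~\ref{pro:split} the unique partitions of that block into $k$ decreasing subsequences differ. Since the global partition restricts to the block partition, the global partitions of the two permutations are different as well.

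I expect the only real subtlety to be the counting step — ensuring that ``at least $k$ per block'' together with the exact global count $\frac{kn}{3k-3}$ rigidly forces ``exactly $k$ per block'' — and the bookkeeping that a partition into decreasing subsequences restricts cleanly to each block, which is precisely where the value separation between blocks is used. The remainder is a direct appeal to Proposition~\ref{pro:split} and to the already-established facts that each block has LIS of length $k$ and LDS of length less than $k$.
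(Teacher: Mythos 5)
Your proof is correct and follows essentially the same route as the paper's: values in later blocks dominate those in earlier blocks, so decreasing subsequences cannot cross block boundaries, and the claim reduces to Proposition~\ref{pro:split} applied block by block. The one place you go beyond the paper is the explicit counting step showing that a global partition into exactly $\frac{kn}{3k-3}$ decreasing subsequences must use exactly $k$ parts per block (since each block has LIS of length $k$ and hence needs at least $k$); the paper leaves this implicit, so your version is, if anything, slightly more careful.
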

\begin{proof}
Let $S$ be a permutation from $U$. Recall that we can partition $S$ into $\frac{n}{3k-3}$ contiguous blocks of length $3k-3$. All values in a block are strictly greater
than the values in all previous blocks,
so in a decreasing subsequence of $S$ we can have only elements
from one block. Since every block corresponds to a permutation from $\Gamma$, by Proposition \ref{pro:split} it can be split into exactly
$k$ decreasing subsequences in only one way. For each two different permutations of $U$, there exists at least one block (i.e., permutation from
$\Gamma$) that differentiates them. By Proposition~\ref{pro:split}, this block is split in a different way
than all the other blocks of $\Gamma$, so the conclusion follows: each particular permutation from $U$ will also be split in a different
way than all other permutations of $U$.
\end{proof}

Having constructed the set $U$, we can proceed with the lower bound. 
Let $A$ be a partial ordering associated with a path to some leaf of $T$ (the comparison tree associated to the algorithm computing the length of a longest $k$-rollercoaster). Since the algorithm cannot distinguish between permutations
following the same path, every permutation consistent with $A$ has to give the same result. We recall the following lemma.

\begin{lemma}[Lemma 3.6 in \cite{Fredman75}]
\label{lem:fredman}
Let $\leq$ be a partial ordering defined on $S$. The maximum length of LIS in $S$ associated with any linear embedding of this ordering, is equal to the minimum number of decreasing subsequences relative to $\leq$ into which $S$ can be partitioned.
\end{lemma}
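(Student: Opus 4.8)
The plan is to recognise the statement as Dilworth's theorem applied to a suitable finite poset, reserving the actual work for the direction that must produce one linear embedding attaining the bound. Regard $S=\{1,\dots,n\}$ as a set of positions, and take a \emph{linear embedding} of $\leq$ to be an assignment of pairwise distinct values to $1,\dots,n$ respecting $\leq$ (equivalently, a linear extension of $\leq$). Define a strict partial order $\prec$ on $S$ by: $a\prec b$ iff $a<b$ as positions and $b\leq a$ in the given ordering (that is, $\leq$ forces the value at $b$ to lie below the value at $a$); transitivity of $\leq$ makes $\prec$ transitive. A subsequence $a_{1},\dots,a_{m}$ with $a_{1}<\dots<a_{m}$ as positions is then \emph{decreasing relative to $\leq$} exactly when $a_{1}\succ a_{2}\succ\dots\succ a_{m}$, so partitioning $S$ into subsequences decreasing relative to $\leq$ is the same as partitioning the poset $(S,\prec)$ into chains. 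Hence, by Dilworth's theorem (stated for sequences as Theorem~\ref{thr-Dilworth}; here we use its general form for finite posets), the minimum number of such subsequences equals the maximum size $t$ of an antichain of $(S,\prec)$, and it remains to show that the maximum, over all linear embeddings of $\leq$, of the LIS length of $S$ equals $t$.

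One inequality is immediate: fix a partition of $S$ into $t$ subsequences decreasing relative to $\leq$ and fix any linear embedding. Within each part all pairs are comparable in $\leq$ in the decreasing direction (transitivity again), so each part is a genuine decreasing subsequence under that embedding; an increasing subsequence therefore meets every part in at most one element, hence has length at most $t$.

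For the reverse inequality we build an embedding whose LIS has length $t$. Let $T=\{a_{1},\dots,a_{t}\}$ be a maximum antichain of $(S,\prec)$, indexed so that $a_{1}<\dots<a_{t}$ as positions. For $i<j$ we have $a_{i}\not\prec a_{j}$, i.e.\ $\leq$ does not force the value at $a_{i}$ above the value at $a_{j}$. The crucial claim is that the system consisting of $\leq$ together with the added constraints $a_{1}<a_{2}<\dots<a_{t}$ on values is consistent, i.e.\ the transitive closure of the associated directed graph $D$ (an edge $u\to v$ meaning ``the value of $u$ is forced below the value of $v$'') is irreflexive. Suppose $D$ had a cycle and take a shortest one; it is simple, and every vertex on it lies in $T$, since a vertex outside $T$ is incident only to $\leq$-edges and two such consecutive edges contract by transitivity of $\leq$, contradicting minimality. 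Now consider any edge $a_{p}\to a_{q}$ of this cycle: if it is a chain edge then $p<q$ by construction, and if it is a $\leq$-edge then $a_{p}\leq a_{q}$, and since $T$ is an antichain we cannot have $a_{q}\prec a_{p}$, which forces $p\leq q$. Thus the index of the $a$'s is non-decreasing along every edge and strictly increasing along chain edges; as the cycle returns to its start and cannot consist of $\leq$-edges only ($\leq$ is acyclic), this is impossible. Hence the system is consistent, so it has a linear extension — a linear embedding of $\leq$ in which $a_{1},\dots,a_{t}$ occur in increasing order of both position and value, i.e.\ an increasing subsequence of length $t$. Combining with the previous paragraph and Dilworth's theorem, the claimed equality follows.

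The main obstacle is the consistency (acyclicity) claim above: it needs the notions of ``linear embedding'' and ``decreasing relative to $\leq$'' to be pinned down precisely, and then a careful shortest-cycle argument to see that the antichain property of $T$ forbids any cycle in $D$. Once that is established, the two inequalities and the reduction to Dilworth's theorem are routine.
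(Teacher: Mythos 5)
Your proof is correct, but note that the paper does not prove this lemma at all --- it is imported verbatim as Lemma~3.6 of Fredman's 1975 paper, so there is no in-paper argument to compare against. What you supply is a self-contained derivation: you encode ``forced to be decreasing'' as a strict partial order $\prec$ on positions, identify partitions into subsequences decreasing relative to $\leq$ with chain partitions of $(S,\prec)$, invoke Dilworth to replace the minimum chain cover by a maximum antichain $T$, and then do the only nontrivial work, namely showing that the constraints of $\leq$ together with the added value-constraints $a_1<\dots<a_t$ on the antichain form an acyclic system, hence admit a linear extension whose LIS contains all of $T$. The shortest-cycle argument is sound: vertices outside $T$ are eliminated by transitivity of $\leq$, and on a cycle inside $T$ the antichain property forces indices to be non-decreasing along $\leq$-edges and strictly increasing along the added edges, which is incompatible with closing the cycle. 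The easy direction (an increasing subsequence meets each forced-decreasing part at most once) is also handled correctly. One cosmetic slip: with your convention $a\prec b$ iff $a<b$ in position and the value at $b$ is forced below the value at $a$, a decreasing subsequence $a_1<\dots<a_m$ satisfies $a_1\prec a_2\prec\dots\prec a_m$, not $a_1\succ a_2\succ\dots\succ a_m$ as written; this does not affect the identification of such subsequences with chains of $(S,\prec)$, nor anything downstream.
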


Now we can prove the following.

\begin{lemma}
Let $A$ be partial ordering associated with the path from the root to a leaf of $T$. Only one permutation from $U$ can be consistent with $A$.
\end{lemma}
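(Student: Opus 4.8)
The plan is to derive a contradiction from assuming two distinct permutations $S_1, S_2 \in U$ are both consistent with the partial ordering $A$ associated to a leaf of $T$. The starting observation is that, since the algorithm returns the length of a longest $k$-rollercoaster and cannot distinguish inputs that follow the same root-to-leaf path, this common value must equal $\frac{kn}{3k-3}$ (the value computed for every permutation in $U$, as established just before Proposition~\ref{pro:partition}). The key is to argue that this numerical output, together with the combinatorial structure forced by $A$, pins down a unique element of $U$, which contradicts $S_1\neq S_2$.

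First I would connect the rollercoaster length to a decomposition count. For every permutation in $U$, a longest $k$-rollercoaster has length exactly $\frac{kn}{3k-3}$ and, as observed, it necessarily consists of the longest increasing subsequences of the $\frac{n}{3k-3}$ blocks concatenated; in particular the longest decreasing subsequence in each such permutation has length less than $k$. Now consider the partial ordering $A$: any linear extension of $A$ is consistent with some permutation, and all such permutations give the same rollercoaster length $\frac{kn}{3k-3}$. Using Lemma~\ref{lem:fredman}, the maximum over linear embeddings of $A$ of the LIS length equals the minimum number of decreasing-subsequence classes (relative to $A$) needed to partition $S$. I would argue that if $A$ admitted a linear extension whose LIS were longer than $k$ in some block, or shorter, we could engineer an input consistent with $A$ having a strictly longer or structurally different longest $k$-rollercoaster, contradicting that the output is the fixed value $\frac{kn}{3k-3}$; hence $A$ forces, within the positions of each block, a structure whose LIS under every linear extension is exactly $k$, equivalently (by Lemma~\ref{lem:fredman}) that the block's positions can be partitioned into exactly $k$ decreasing classes relative to $A$ and no fewer.

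Next I would invoke Proposition~\ref{pro:partition}: each permutation of $U$ can be split into $\frac{kn}{3k-3}$ descending subsequences in exactly one way, and distinct permutations of $U$ yield distinct splittings. The remaining task is to show that the unique minimum decreasing-subsequence partition relative to $A$ (restricted block by block) coincides with the canonical splitting of any permutation of $U$ consistent with $A$. Since $A$ is a sub-relation of the total orders given by both $S_1$ and $S_2$, a decreasing chain relative to $A$ is a decreasing chain in both $S_1$ and $S_2$; so a minimum-size partition into $A$-decreasing classes is simultaneously a valid partition into $S_1$-decreasing and $S_2$-decreasing classes. Because that number of classes is forced to be exactly $k$ per block (hence $\frac{kn}{3k-3}$ overall), and because by Proposition~\ref{pro:partition} each $S_i$ has a unique such splitting, the $A$-partition must equal both the unique splitting of $S_1$ and the unique splitting of $S_2$. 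Therefore $S_1$ and $S_2$ have the same canonical splitting, which by the second part of Proposition~\ref{pro:partition} forces $S_1 = S_2$, the desired contradiction.

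The main obstacle I anticipate is the first reduction: rigorously showing that consistency with $A$ plus the fixed output value $\frac{kn}{3k-3}$ really does force, within each block's set of positions, the "exactly $k$ decreasing classes and not fewer" property. One direction (at most $k$ needed) is easy, since any linear extension of $A$ gives some permutation whose per-block LIS is at least the block's forced LIS, and too large a total LIS would exceed the rollercoaster bound; the delicate direction is ruling out that $A$ is so weak that some linear extension makes a block split into fewer than $k$ decreasing classes, which would make the LIS of that linear embedding larger than $k$ and hence (gluing across blocks, as the longest $k$-rollercoaster does) produce an input consistent with $A$ whose longest $k$-rollercoaster strictly exceeds $\frac{kn}{3k-3}$ — contradicting that every input consistent with $A$ yields the same answer. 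Making this gluing argument airtight, while respecting that the decreasing subsequences across blocks cannot be merged (since later blocks have strictly larger values), is where the care is needed.
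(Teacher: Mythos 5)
Your overall plan is the paper's: show that the minimum number $m$ of decreasing subsequences relative to $A$ equals $d=\frac{kn}{3k-3}$, observe that such a minimum partition is simultaneously a valid partition into $d$ decreasing subsequences of every $S\in U$ consistent with $A$ (since $A$ is a sub-relation of each such total order), and then invoke Proposition~\ref{pro:partition} to force $S_1=S_2$. Your concluding step is fine and is in fact spelled out more explicitly than in the paper.

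The one place where your write-up has a real gap is exactly the step you flag as delicate: forcing ``exactly $k$ decreasing classes, and no fewer, within each block'' and then gluing linear extensions across blocks. A linear extension of $A$ need not respect the block structure at all ($A$ only records the comparisons the algorithm actually made, so nothing guarantees that later blocks still carry larger values, or that a per-block extension with LIS $k+1$ can be combined with extensions of the other blocks into one global witness). The paper sidesteps this entirely by applying Lemma~\ref{lem:fredman} \emph{globally}: if $m>d$, Fredman's lemma yields a single permutation $S'$ consistent with $A$ whose LIS exceeds $d$; since $d\geq k$, that LIS is by itself a $k$-rollercoaster of length greater than $d$, contradicting that every input following the path $A$ receives the answer $d$. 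The reverse inequality $m\geq d$ is immediate because any $A$-decreasing chain is $S$-decreasing and $S$ cannot be covered by fewer than $d$ decreasing subsequences (its LIS has length $d$). If you replace your per-block analysis by this global application of Lemma~\ref{lem:fredman}, your argument closes and coincides with the paper's; as written, the gluing step is not justified and the per-block claim is stronger than what is needed.
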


\begin{proof}
Consider $S\in U$ that is consistent with $A$, and let
$D=\frac{kn}{3k-3}$ be the length of its LIS.
Now let $m$ be the minimum number of decreasing subsequences
relative to the results of the comparisons made on the path $A$ into which
$S$ can be partitioned. 
If $m<d$ then $S$ is consistent with $A$, so we can partition $S$ into the same decreasing subsequences, but $S$ cannot be divided into less than than
$d$ decreasing subsequences, a contradiction.
If $m>d$ then by Lemma \ref{lem:fredman} there exists a permutation $S'$ consistent with $A$ with the length of LIS greater than $d$. $S'$ follows the same path as $S$ in the comparison tree, but has a longer $k$-rollercoaster (consisting only of LIS of $S'$) than $S$, a contradiction. 
Thus, $m=d$ for any such $S$.

Consider two $S_{1},S_{2}\in U$ consistent with $A$.
By Proposition \ref{pro:partition}, the only partition of $S_1$ into $d$ decreasing sequences is different from the only such partition of $S_2$ (into $d$ decreasing sequences), so $A$ can be consistent with only one permutation,
a contradiction.
\end{proof}

Thus, each permutation from $U$ corresponds to a distinct leaf of $T$, making the depth of $T$
at least $\log\abs{U} = \Theta(n \log{k})$ as required and proving the following theorem.

\begin{theorem}
\label{thm:lb}
For every $k$ satisfying $4 \leq k \leq \frac{n}{3}$, any comparison-based algorithm that computes the length of a longest $k$-rollercoaster
in a permutation of $\{1,\ldots,n\}$ performs at least $\Omega(n \log{k})$ comparisons.
\end{theorem}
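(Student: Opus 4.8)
The plan is to follow the standard adversary / comparison-tree argument à la Fredman, but applied to the set $U$ constructed above. First I would fix the comparison tree $T$ of an arbitrary comparison-based algorithm computing the length of a longest $k$-rollercoaster, and recall that each leaf corresponds to a partial order $A$ obtained by conjoining the outcomes of the comparisons along the root-to-leaf path; since the algorithm's output depends only on the leaf reached, any two permutations consistent with the same $A$ must have the same longest-$k$-rollercoaster length. The goal is then to show $\log |U| = \Theta(n\log k)$ and that each leaf is consistent with at most one permutation of $U$, since together with the fact that a binary tree of $L$ leaves has height $\geq \log L$ this yields the $\Omega(n\log k)$ bound.

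The counting part is routine: $|U| = (k^{k-3})^{n/(3k-3)} = k^{n(k-3)/(3k-3)}$, and for $4 \le k \le n/3$ one checks $(k-3)/(3k-3)$ is bounded away from $0$ (it is increasing in $k$ and already at least $1/9$ for $k=4$, tending to $1/3$), so $\log|U| = \Theta(n\log k)$. For divisibility I would simply assume $(3k-3)\mid n$ as the paper does, noting the general case changes the bound only by lower-order terms (pad with a short fixed suffix). The main work is the injectivity of leaves on $U$, and here the key tool is Lemma~\ref{lem:fredman}: for a permutation $S$ consistent with $A$, the LIS length realizable under some linear extension of $A$ equals the minimum number $m$ of decreasing chains (w.r.t.\ $A$) covering $S$.

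The heart of the argument is the following dichotomy, which I would spell out. Take $S\in U$ consistent with $A$; it has LIS length exactly $d := kn/(3k-3)$, and by Dilworth's theorem (Theorem~\ref{thr-Dilworth}) $S$ decomposes into exactly $d$ decreasing subsequences under the true order. Let $m$ be the minimum number of $A$-decreasing chains covering $S$. On the one hand $m \le d$, since the true-order decomposition is an $A$-decomposition. On the other hand, if $m < d$, then by Lemma~\ref{lem:fredman} there is a linear extension $S'$ of $A$ with LIS length $m < d$; but $S'$ and $S$ reach the same leaf, and $S'$ has a strictly shorter longest $k$-rollercoaster (recall for permutations in $U$ the optimum equals the LIS, glued block-by-block, because every block has LDS $< k$ and blocks are value-separated — one must re-verify this optimum statement applies since $S'$ need not lie in $U$; here one argues that any permutation consistent with $A$ still has its decreasing chains confined within value ranges forced by $A$, hence LDS still $< k$, so its longest $k$-rollercoaster equals its LIS). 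Hence $m = d$. Finally, by Proposition~\ref{pro:partition}, the unique partition of a permutation of $U$ into $d$ decreasing chains is different for different members of $U$; since both $S_1, S_2 \in U$ consistent with $A$ would give partitions into $d$ chains that are $A$-decreasing — and two distinct such members have genuinely distinct forced partitions — no leaf can be consistent with two of them.

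The step I expect to be the main obstacle is the claim that any permutation $S'$ consistent with $A$ (not merely those in $U$) has its longest $k$-rollercoaster equal to its LIS: the clean argument in the paper that LDS $< k$ uses the rigid value-separation of blocks in $U$, and one must argue this separation, or at least enough of it, is already entailed by $A$. Concretely, since $S\in U$ consistent with $A$ forces, via the comparisons actually performed, that elements of block $i+1$ compare greater than elements of block $i$ that were ever compared across blocks — and one may assume WLOG the adversary/argument only needs the within-block orderings of $\Gamma$ plus cross-block dominance, which Fredman's construction arranges — the LDS bound survives. If this entailment is not automatic, the fix is to restrict attention to leaves reachable by permutations of $U$ (which is all we need) and argue about $S'$ only through Lemma~\ref{lem:fredman} applied to the partial order $A$ itself, bounding $S'$'s longest $k$-rollercoaster by its LIS using the observation that $A$ forces every $A$-decreasing chain to lie within a single block's value window. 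I would present the injectivity lemma in that self-contained form to sidestep the subtlety.
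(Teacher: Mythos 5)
Your overall architecture (comparison tree, the set $U$, Fredman's Lemma~\ref{lem:fredman}, Proposition~\ref{pro:partition} for injectivity) matches the paper's, and the counting of $|U|$ is fine. But the dichotomy at the heart of your injectivity argument is set up with the inequalities reversed, and this creates a genuine gap. A ``decreasing subsequence relative to $A$'' in Lemma~\ref{lem:fredman} is one all of whose pairs are \emph{forced} to be decreasing by the partial order $A$; the unique true-order decomposition of $S\in U$ into $d$ decreasing chains need not consist of such $A$-forced chains (the algorithm may never have compared those particular elements), so your claim that ``$m\le d$ since the true-order decomposition is an $A$-decomposition'' is unjustified. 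Worse, the case you are then left to handle, $m<d$, is exactly the one that requires an \emph{upper} bound on the longest $k$-rollercoaster of an arbitrary linear extension $S'$ of $A$, i.e., a proof that $S'$ still has LDS $<k$. You correctly flag this as the main obstacle, but neither proposed patch works: $A$ records only the comparisons the algorithm happened to make, and nothing forces $A$ to entail the blockwise value-separation of $U$, so an extension $S'$ may well contain a long decreasing subsequence and hence a $k$-rollercoaster having nothing to do with its LIS.

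The paper's proof flips the dichotomy so that only a \emph{lower} bound on rollercoasters of extensions is ever needed. The easy direction is $m\ge d$: a cover of $S$ by $m$ $A$-decreasing chains is a cover by $m$ truly decreasing subsequences of $S$, and since $\mathrm{LIS}(S)=d$ no such cover has fewer than $d$ parts. If $m>d$, Lemma~\ref{lem:fredman} yields a linear extension $S'$ of $A$ with $\mathrm{LIS}(S')=m>d\ge k$; that LIS is by itself a one-run $k$-rollercoaster of length greater than $d$, so the algorithm's output at this leaf cannot equal $d$ --- a contradiction that requires no control whatsoever over $\mathrm{LDS}(S')$. Hence $m=d$, a minimum $A$-decreasing cover has exactly $d$ chains, each truly decreasing in every permutation of $U$ consistent with $A$, and Proposition~\ref{pro:partition} finishes the injectivity. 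You should restructure your argument along these lines; as written, the $m<d$ branch cannot be completed.
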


\bibliography{biblio}

\begin{thebibliography}{10}

\bibitem{AaK}
Alok Aggarwal and Maria~M. Klawe.
\newblock Applications of generalized matrix searching to geometric algorithms.
\newblock {\em Discrete Applied Mathematics}, 27(1-2):3--23, 1990.

\bibitem{SMAWK}
Alok Aggarwal, Maria~M. Klawe, Shlomo Moran, Peter~W. Shor, and Robert~E.
  Wilber.
\newblock Geometric applications of a matrix-searching algorithm.
\newblock {\em Algorithmica}, 2:195--208, 1987.

\bibitem{Aldous99}
David Aldous and Persi Diaconis.
\newblock Longest increasing subsequences: From patience sorting to the
  {B}aik-{D}eift-{J}ohansson theorem.
\newblock {\em Bulletin of the American Mathematical Society}, 36:413--432,
  1999.

\bibitem{BESPAMYATNIKH20007}
Sergei Bespamyatnikh and Michael Segal.
\newblock Enumerating longest increasing subsequences and patience sorting.
\newblock {\em Information Processing Letters}, 76(1):7 -- 11, 2000.

\bibitem{Biedl2018}
Therese~C. Biedl, Ahmad Biniaz, Robert Cummings, Anna Lubiw, Florin Manea, Dirk
  Nowotka, and Jeffrey Shallit.
\newblock Rollercoasters and caterpillars.
\newblock In {\em {ICALP}}, volume 107 of {\em LIPIcs}, pages 18:1--18:15.
  Schloss Dagstuhl - Leibniz-Zentrum fuer Informatik, 2018.

\bibitem{Biedl2017}
Therese~C. Biedl, Timothy~M. Chan, Martin Derka, Kshitij Jain, and Anna Lubiw.
\newblock Improved bounds for drawing trees on fixed points with {L}-shaped
  edges.
\newblock In {\em Graph Drawing and Network Visualization - 25th International
  Symposium, {GD} 2017, Revised Selected Papers}, volume 10692 of {\em Lecture
  Notes in Computer Science}, pages 305--317. Springer, 2017.

\bibitem{Ch}
Bernard Chazelle.
\newblock A functional approach to data structures and its use in
  multidimensional searching.
\newblock {\em {SIAM} J. Comput.}, 17(3):427--462, 1988.

\bibitem{Crochemore}
Maxime Crochemore and Ely Porat.
\newblock Fast computation of a longest increasing subsequence and application.
\newblock {\em Information and Computation}, 208(9):1054--1059, 2010.

\bibitem{Erdos1935}
Paul {E}rd{\H{o}}s and George {S}zekeres.
\newblock A combinatorial problem in geometry.
\newblock {\em Compositio Mathematica}, 2:463--470, 1935.

\bibitem{Fredman75}
Michael~L. Fredman.
\newblock On computing the length of longest increasing subsequences.
\newblock {\em Discrete Mathematics}, 11(1):29--35, 1975.

\bibitem{Tarjan}
Hania Gajewska and Robert~Endre Tarjan.
\newblock Deques with heap order.
\newblock {\em Information Processing Letters}, 22(4):197--200, 1986.

\bibitem{Hunt}
James~W. Hunt and Thomas~G. Szymanski.
\newblock A fast algorithm for computing longest common subsequences.
\newblock {\em Communications of the ACM}, 20(5):350--353, 1977.

\bibitem{Kitaev:2011}
Sergey Kitaev.
\newblock {\em Patterns in Permutations and Words}.
\newblock Springer, 2011.

\bibitem{Linton:2010}
S.~Linton, N.~Ru\v{s}kuc, and V.~Vatter, editors.
\newblock {\em Permutation Patterns}.
\newblock London Mathematical Society Lecture Note Series, vol.~376, Cambridge,
  2010.

\bibitem{Romik:2015}
Dan Romik.
\newblock {\em The Surprising Mathematics of Longest Increasing Subsequences}.
\newblock Cambridge, 2015.

\bibitem{eaea}
Craige Schensted.
\newblock Longest increasing and decreasing subsequences.
\newblock {\em Canadian Journal of Mathematics}, 13:179--191, 1961.

\bibitem{Steele1995}
J.~Michael Steele.
\newblock Variations on the monotone subsequence theme of {E}rd{\H{o}}s and
  {S}zekeres.
\newblock In David Aldous, Persi Diaconis, Joel Spencer, and J.~Michael Steele,
  editors, {\em Discrete Probability and Algorithms}, pages 111--131. Springer
  New York, 1995.

\bibitem{Sun2007}
Xiaoming Sun and David~P. Woodruff.
\newblock The communication and streaming complexity of computing the longest
  common and increasing subsequences.
\newblock In {\em Proceedings of the Eighteenth Annual ACM-SIAM Symposium on
  Discrete Algorithms}, SODA '07, pages 336--345. Society for Industrial and
  Applied Mathematics, 2007.

\bibitem{Tiskin}
Alexander Tiskin.
\newblock Fast distance multiplication of unit-{M}onge matrices.
\newblock {\em Algorithmica}, 71(4):859--888, 2015.

\end{thebibliography}

\end{document}